\documentclass[final,3p]{elsarticle}

\usepackage[OT1]{fontenc}
\usepackage[english]{babel}
\usepackage[utf8]{inputenc}
\usepackage[svgnames]{xcolor}
\usepackage{amssymb,amsfonts,mathrsfs,yhmath}
\usepackage[amsthm,amsmath,thmmarks]{ntheorem}
\usepackage{enumitem}
\usepackage{expdlist,graphicx}
\usepackage{booktabs}
\usepackage{caption}
\usepackage[labelformat=simple]{subcaption}

\captionsetup[figure]{font=footnotesize, format=hang, labelformat=simple, labelsep=colon, labelfont=bf}
\captionsetup[subfigure]{font=footnotesize, subrefformat=parens, labelformat=simple, labelsep=space, labelfont=bf}
\usepackage{epsfig}
\usepackage[numbers]{natbib}
\usepackage[linkcolor=Blue,colorlinks=true,citecolor=ForestGreen,filecolor=red,urlcolor=Blue]{hyperref}
\usepackage{cleveref}[2011/12/24]
\usepackage[pagewise,displaymath,mathlines]{lineno}

\theoremstyle{plain}
\newtheorem{theorem}{Theorem}
\newtheorem{corollary}[theorem]{Corollary}
\newtheorem{lemma}[theorem]{Lemma}
\newtheorem{proposition}[theorem]{Proposition}
\newtheorem{fact}[theorem]{Fact}

\newtheorem{claim}[theorem]{Claim}
\theoremstyle{definition}
\newtheorem{definition}[theorem]{Definition}

\theoremstyle{remark}

\date{}

\begin{document}
\begin{frontmatter}

\title{The VC-Dimension of Graphs with Respect to $k$-Connected Subgraphs}
\author{Andrea Munaro\corref{f1}} 
\ead{Andrea.Munaro@grenoble-inp.fr}
\address{Laboratoire G-SCOP, Universit\'{e} Grenoble Alpes}
\cortext[f1]{This work was partly done at Dept. of Computer Science, University of Bonn.}

\begin{abstract} 
We study the VC-dimension of the set system on the vertex set of some graph which is induced by the family of its $k$-connected subgraphs. In particular, we give tight upper and lower bounds for the VC-dimension. Moreover, we show that computing the VC-dimension is $\mathsf{NP}$-complete and that it remains $\mathsf{NP}$-complete for split graphs and for some subclasses of planar bipartite graphs in the cases $k = 1$ and $k = 2$. On the positive side, we observe it can be decided in linear time for graphs of bounded clique-width. 
\end{abstract}

\begin{keyword} VC-dimension \sep $k$-connected \sep $\mathsf{NP}$-complete 
\end{keyword}

\end{frontmatter}

\section{Introduction}

The parameter now called VC-dimension of a set system was introduced by \citet{VC71}. The initial interest was in the contexts of empirical process theory and learning theory, where it proved to be a fundamental concept. It represents a prominent measure of the ``complexity'' of the set system. Let $\mathcal{H}$ be a set system on a finite set $X$. A subset $Y \subseteq X$ is \textit{shattered by $\mathcal{H}$} if $\left\{E \cap Y : E \in \mathcal{H}\right\} = 2^{Y}$. The \textit{VC-dimension} of $\mathcal{H}$ is defined as the maximum size of a set shattered by $\mathcal{H}$. One might think to apply the abstract notion of VC-dimension to some concrete settings. A natural choice is the study of the VC-dimension associated to graphs. Given a graph, we consider set systems induced by a certain family of subgraphs. In this way we obtain several different notions of VC-dimension, each one related to a special family of subgraphs. This study was first initiated in a seminal paper by \citet{HW87}. They considered the set system induced by closed neighbourhoods of the vertices. \citet{KKRUW97} investigated the VC-dimensions induced by other families of subgraphs. They adapted the definition of VC-dimension to the graph theoretic setting as follows.

\begin{definition}\label{classes} Let $G = (V, E)$ be a graph and let $\mathcal{P}$ be a family of subgraphs of $G$. A subset $A \subseteq V$ is \textit{$\mathcal{P}$-shattered} if every subset of $A$ can be obtained as the intersection of $V(H)$, for $H \in \mathcal{P}$, with $A$. The \textit{VC-dimension} of $G$ with respect to $\mathcal{P}$, denoted by $\textsc{VC}_{\mathcal{P}}(G)$, is defined as the maximum size of a $\mathcal{P}$-shattered subset.   
\end{definition}

Note that in this paper we consider only finite undirected simple graphs and we use standard graph theoretic terminology from \cite{West}, unless stated otherwise. For definitions and examples related to tree-width and clique-width we refer the reader to \citep{CO00, KLM09}, while for an introduction to monadic second-order logic of graphs see, for example, \citep{Cou90}.

According to \Cref{classes}, we denote by $\textsc{VC}_{\mbox{\scriptsize{\normalfont{tree}}}}$, $\textsc{VC}_{\mbox{\scriptsize{\normalfont{con}}}}$, $\textsc{VC}_{k-\mbox{\scriptsize{\normalfont{con}}}}$, $\textsc{VC}_{\mbox{\scriptsize{\normalfont{nbd}}}}$, $\textsc{VC}_{\mbox{\scriptsize{\normalfont{path}}}}$, $\textsc{VC}_{\mbox{\scriptsize{\normalfont{cycle}}}}$ and $\textsc{VC}_{\mbox{\scriptsize{\normalfont{star}}}}$ the VC-dimensions with respect to families of subgraphs which are tree, connected, $k$-connected, closed neighbourhood, path, cycle and star, respectively. Note that the VC-dimension with respect to some families of subgraphs is equal to well established quantities in graph theory: if $\mathcal{P}$ is the family of complete subgraphs then $\textsc{VC}_{\mathcal{P}}$ is the clique number, while if $\mathcal{P}$ is the family of subgraphs induced by independent sets then $\textsc{VC}_{\mathcal{P}}$ is the independence number.

Since a graph of order $n$ has $n$ closed neighbourhoods, then its VC-dimension is at most $\left\lfloor \log_{2}n \right\rfloor$ \cite{HW87}. It is not difficult to show that this bound is tight \cite{ABC95}. Indeed, consider the graph $H$ built as follows. Take a set $S$ of $\left\lfloor\log_{2}n\right\rfloor$ independent vertices. For each non-singleton subset $R \subseteq S$, add a vertex $v_{R}$ adjacent to precisely the vertices of $R$. The resulting graph $H$ has at most $n$ vertices and $\textsc{VC}_{\mbox{\scriptsize{\normalfont{nbd}}}}(H) = \left\lfloor\log_{2}n\right\rfloor$. If $G$ is a graph with maximum degree $\Delta$, then it is easy to see that $\Delta \leq \textsc{VC}_{\mbox{\scriptsize{\normalfont{star}}}} \leq \Delta + 1$ \cite{KKRUW97}. The VC-dimension with respect to trees is the same as the VC-dimension with respect to connected subgraphs \cite{KKRUW97}. This is an immediate consequence of the fact that a connected graph contains a spanning tree. 

\citet{KKRUW97} related the VC-dimension of a graph $G$ with respect to connected subgraphs to the number of leaves $\ell(G)$ in a maximum leaf spanning tree of $G$.  

\begin{theorem}[\citet{KKRUW97}]\label{con} $\ell(G) \leq \textsc{VC}_{\mbox{\scriptsize{\normalfont{con}}}}(G) \leq \ell(G) + 1$, for any graph $G$. 
\end{theorem}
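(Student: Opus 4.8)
The statement combines a lower and an upper bound proved by independent arguments; the lower bound is routine, while the upper bound is the substantial part.

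\emph{Lower bound.} I would show that the leaf set of a maximum-leaf spanning tree is already shattered by connected subgraphs. Fix a spanning tree $T$ of $G$ realising $\ell(G)$ and let $L$ be its set of leaves. For a nonempty $S\subseteq L$, let $T_S$ be the minimal subtree of $T$ whose vertex set contains $S$; this is a connected subgraph of $G$. Any vertex of $T_S$ outside $S$ is an interior vertex of $T_S$, so it has degree at least $2$ in $T_S$ and hence in $T$, so it is not a leaf of $T$; thus $V(T_S)\cap L=S$. The empty set is realised by any single interior vertex of $T$, and if $T$ has no interior vertex then $|V(G)|\le 2$ and the claim is clear. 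Hence $L$ is shattered and $\textsc{VC}_{\mbox{\scriptsize{\normalfont{con}}}}(G)\ge|L|=\ell(G)$.

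\emph{Upper bound: set-up and reductions.} It suffices to show that a set $A$ shattered by connected subgraphs, with $|A|=k$, forces $G$ to have a spanning tree with at least $k-1$ leaves. Since for $k\ge 2$ all of $A$ lies in one component, and the statement is trivial for $k\le 1$, I may assume $G$ connected, and then induct on $|V(G)|$. If $G$ has an edge $uv$ with $u,v\notin A$, I would contract it: $A$ is still shattered in $G/uv$ (a connected subgraph of $G$ realising some $S\subseteq A$ projects to one in $G/uv$ with the same trace on $A$, and conversely a connected subgraph of $G/uv$ lifts back), and $\ell$ cannot drop under contraction (lift a maximum-leaf spanning tree of $G/uv$, split the contracted vertex between $u$ and $v$ and re-insert the edge $uv$; no leaf is lost). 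So the induction hypothesis applied to $G/uv$ already gives $\ell(G)\ge\ell(G/uv)\ge k-1$. This leaves the case that $V(G)\setminus A$ is independent, i.e.\ $A$ is a vertex cover of $G$; in the extreme sub-case $A=V(G)$ the shattering hypothesis forces every induced subgraph of $G$ to be connected, hence $G=K_n$ and $\ell(G)=n-1=k-1$.

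\emph{Upper bound: the core case.} Assume $A$ is a vertex cover of $G$. Take a minimal connected subgraph $H$ of $G$ containing $A$; then $H$ is a tree, all of its leaves lie in $A$, and, because $V(G)\setminus A$ is independent, every non-$A$ vertex of $H$ is interior with all its neighbours in $A$. One now extends $H$ to a spanning tree $T$ of $G$ and counts leaves: those leaves of $H$ that remain leaves of $T$, plus the leaves created when hanging on the vertices outside $H$. The shattering hypothesis — which, for each $a\in A$, supplies a connected subgraph containing $A\setminus\{a\}$ and avoiding $a$ — is what should let one choose $H$ and perform the extension so that attachments can be routed away from the vertices of $A$ and the leaf count reaches $k-1$.

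\emph{Main obstacle.} The delicate point is exactly that last count: getting $k-1$ leaves and not $k-2$, i.e.\ recovering the single extra leaf. The naive moves — delete one vertex of $A$ and recurse, or extend $H$ carelessly — always lose it, and the loss is real in small examples such as $K_2$, $C_4$, and short paths, so excluding it in the relevant range is where the \emph{full} force of ``$A$ is shattered'' is needed, not merely the realisability of the sets $A\setminus\{a\}$. Equivalently, and perhaps more comfortably, one may replace $\ell(G)$ by $|V(G)|-\gamma_c(G)$ with $\gamma_c$ the connected domination number, so that the goal becomes: a shattered set of size $k$ admits a connected dominating set of size at most $|V(G)|-k+1$.
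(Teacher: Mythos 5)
Your lower bound is correct and is the standard argument: the minimal subtree of a maximum-leaf spanning tree that spans a subset $S$ of the leaf set meets the leaf set in exactly $S$ (the realisation of the empty set for $|V(G)|\le 2$ is a definitional edge case you are right not to dwell on).

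The upper bound, however, has a genuine gap, and you identify it yourself: after reducing to the case where $A$ is a vertex cover, the Steiner-tree-plus-extension argument is never made to yield $|A|-1$ leaves; your ``main obstacle'' paragraph describes the missing step rather than supplying it. The contraction reduction is sound but is not where the difficulty lies, so it buys you little. The paper does not reprove \Cref{con} directly (it cites \citet{KKRUW97}), but its proof of \Cref{kcon} contains, as the case $k=1$, a construction that closes exactly this gap and needs no reduction: fix $r\in A$ and, for each $w\in A\setminus\{r\}$, use the shattering hypothesis to pick a connected subgraph $H_w$ with $V(H_w)\cap A=\{r,w\}$ and inside it an $r$--$w$ path $P_w$, which is therefore internally disjoint from $A$. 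In the union of these paths every $w\in A\setminus\{r\}$ has degree one, since no other path may visit it; hence any spanning tree of that union keeps all of $A\setminus\{r\}$ as leaves, and extending greedily to a spanning tree of $G$ never decreases the leaf count, giving $\ell(G)\ge |A|-1$. The idea you are missing is to route one path per element of $A$ through subgraphs that avoid the rest of $A$ --- that is, to exploit the shattering of the doubletons $\{r,w\}$ rather than of the sets $A\setminus\{a\}$ --- so that the elements of $A$ are leaves by construction instead of being recovered by a delicate count at the end.
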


Recall that $\ell(G) = |V(G)| - \gamma_{c}(G)$, where $\gamma_{c}(G)$ is the connected domination number, the minimum size of a connected dominating set in $G$. Moreover, deciding $\gamma_{c}(G)$ (and therefore $\ell(G)$) is $\mathsf{NP}$-complete \citep{GJ79}. A natural question is to investigate the computational complexity of computing $\textsc{VC}_{\mathcal{P}}(G)$ for a given graph $G$ and a family of its subgraphs $\mathcal{P}$. The decision problem is formulated as follows:

\begin{center}
\fbox{%
\begin{minipage}{5.5in}
\textsc{Graph $\textsc{VC}_{\mathcal{P}}$ Dimension}
\begin{description}[\compact\breaklabel\setleftmargin{70pt}]
\item[Instance:] A graph $G$ and a number $s \geq 1$.
\item[Question:] Does $\textsc{VC}_{\mathcal{P}}(G) \geq s$ hold?  
\end{description}
\end{minipage}}
\end{center}

\subsection{Our results}

In \Cref{second} we extend \Cref{con} by giving tight upper and lower bounds on the VC-dimension with respect to $k$-connected subgraphs, for $k \geq 2$. These are given, similarly to \Cref{con}, in terms of the number of leaves in a maximum leaf spanning tree. In \Cref{third} we prove that the related decision problem \textsc{Graph $\textsc{VC}_{k-\mbox{\scriptsize{\normalfont{con}}}}$ Dimension} is $\mathsf{NP}$-complete even for split graphs. On the positive side, we show it can be decided in linear time for graphs of bounded clique-width and in polynomial time for for the subclass of split graphs having Dilworth number at most $2$. Finally, we prove that \textsc{Graph $\textsc{VC}_{\mbox{\scriptsize{\normalfont{con}}}}$ Dimension} and \textsc{Graph $\textsc{VC}_{2-\mbox{\scriptsize{\normalfont{con}}}}$ Dimension} remain $\mathsf{NP}$-complete for some subclasses of planar bipartite graphs with maximum degree at most $4$. The following table summarizes the known results about the complexity of \textsc{Graph $\textsc{VC}_{\mathcal{P}}$ Dimension}. 

\begin{center}
\begin{tabular}{@{}lllr@{}} \toprule
Family $\mathcal{P}$ & Graph $G$ & Computational Complexity & Reference \\ \midrule
star &  & polynomial time & \citet{KKRUW97} \\
neighbourhood &  & $\mathsf{LOGNP}$-complete & \citet{KKRUW97} \\
path &  & $\mathsf{\Sigma}^{p}_{3}$-complete & \citet{Sch00} \\
cycle &  & $\mathsf{\Sigma}^{p}_{3}$-complete & \citet{Sch00} \\
$k$-connected & split & $\mathsf{NP}$-complete & \Cref{reduction} \\
$k$-connected & bounded clique-width & linear time & \Cref{treeclique} \\
$k$-connected & split, Dilworth number $\leq 2$ & polynomial time & \Cref{thres} \\
connected & planar, bipartite, $\Delta(G) = 3$ & $\mathsf{NP}$-complete & \Cref{redpla} \\
$2$-connected & planar, bipartite, $\Delta(G) = 4$ & $\mathsf{NP}$-complete & \Cref{redpla2con} \\ \bottomrule
\end{tabular}
\end{center} 

\section{Bounds on the VC-dimension}\label{second}

We extend \Cref{con} by considering families of $k$-connected subgraphs, for $k \geq 2$. Concerning the upper bound, the idea is to construct a spanning tree with at least $\textsc{VC}_{k-\mbox{\scriptsize{\normalfont{con}}}}(G) + k - 1$ leaves. We fix a shattered set $A$ of maximum cardinality and choose an appropriate vertex $r \in A$ as the root. Then we consider some $k$ neighbours of $r$, say $u_{1}, \dots, u_{k}$, and we try to ``attach'' the remaining vertices in $A$ to the graph $(\left\{r, u_{1}, \dots, u_{k}\right\}, \left\{ru_{1}, \dots, ru_{k}\right\})$ via appropriate paths.  

\begin{theorem}\label{kcon} $\textsc{VC}_{k-\mbox{\scriptsize{\normalfont{con}}}}(G) \leq \ell(G) - k + 1$, for any connected graph $G$ and $k \geq 2$.
\end{theorem}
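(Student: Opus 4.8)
The plan is to build, from a maximum $k$-connected-shattered set, a spanning tree of $G$ with at least $\textsc{VC}_{k-\mbox{\scriptsize{\normalfont{con}}}}(G)+k-1$ leaves; this is equivalent to the stated inequality. Write $d:=\textsc{VC}_{k-\mbox{\scriptsize{\normalfont{con}}}}(G)$ and fix a $k$-connected-shattered set $A$ with $|A|=d$. Since every $k$-connected subgraph lies in a single component of $G$, the set $A$ lies in one component; attaching the remaining vertices of $G$ to a spanning tree of that component never destroys a leaf, so I may assume $G$ is connected (and that $G$ has a $k$-connected subgraph, the statement being trivial otherwise). I will lean on two elementary facts. \emph{Monotonicity:} if $F$ is a connected subgraph of $G$, then $\ell(G)\ge \ell(F)$, because one can enlarge any spanning tree of $F$ to a spanning tree of $G$ by adding the missing vertices one at a time, each by a single edge, which never decreases the number of leaves. \emph{Leaves in a $k$-connected graph:} every $k$-connected graph $C$ satisfies $\ell(C)\ge k$; indeed $C$ stays connected after deleting any $k-1$ vertices, so deleting $k-1$ arbitrary vertices and then one non-cut vertex of the (connected, $\ge 2$-vertex) remainder produces a set $L$ of $k$ vertices with $C-L$ connected and non-empty, and since every vertex of $C$ has degree $\ge k=|L|$, each vertex of $L$ still has a neighbour in $C-L$; a spanning tree of $C-L$ with $L$ attached as pendants then has $\ge k$ leaves.

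Now I would expose the subgraphs witnessing that $A$ is shattered. Choose a root $r\in A$. For each $u\in A\setminus\{r\}$ the pair $\{r,u\}$ is realised by a $k$-connected $H_u$ with $V(H_u)\cap A=\{r,u\}$; pick an $r$–$u$ path $P_u$ inside $H_u$, so its interior avoids $A$. The singleton $\{r\}$ is realised by a $k$-connected $C$ with $V(C)\cap A=\{r\}$; by the argument for the second fact I may pick $L\subseteq V(C)\setminus\{r\}$ with $|L|=k$, $C-L$ connected and containing $r$, and each vertex of $L$ adjacent to $C-L$. (More generally, for every $v\in A$ the singleton realiser $C_v$ shows that $v$ has at least $k$ neighbours outside $A$; this is a reservoir of extra leaves to draw on if the root gadget must be rebalanced.) Put $F:=C\cup\bigcup_{u\in A\setminus\{r\}}P_u$, a connected subgraph of $G$ with $V(F)\cap A=A$. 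In $F$ every $u\in A\setminus\{r\}$ has degree $1$ (it occurs only on $P_u$, as its endpoint), so these $d-1$ vertices are leaves of \emph{any} spanning tree of $F$; and the $k$ vertices of $L$ can be attached as pendants. Concretely, take a spanning tree of $(C-L)\cup\bigcup_u(P_u-u)$, then hang each $u$ and each vertex of $L$ as a pendant: this is a spanning tree of $F$ with at least $(d-1)+k=d+k-1$ leaves, whence $\ell(G)\ge\ell(F)\ge d+k-1$ by monotonicity. (For $d\le 1$ the path system is empty and $F=C$ already gives $\ell(G)\ge k\ge d+k-1$.)

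The main obstacle is that the two gadgets can interfere: a path $P_u$ may run through $C$, in particular through $L$, in which case deleting $L$ from $F$ breaks $P_u-u$ and the vertices of $L$ can no longer be made into leaves independently of the pendants at the $u$'s — and one must also take care that the root $r$ (or whatever plays its role) does not quietly absorb one of the counted leaves. I expect to resolve this by choosing the paths and $L$ in tandem: fix the $P_u$ first, note which vertices of $C$ they use, and re-run the selection of $L$ inside the $k$-connected $C$ so that $L$ is disjoint from $\bigcup_u V(P_u)$ — there is room to do so because $C$ is $k$-connected — so that removing $L$ leaves the paths intact; in the residual cases where the paths saturate $C$, one instead moves the $k$ extra leaves into other singleton blobs $C_v$, keeping only a minimal set of $A$-vertices internal to maintain connectivity and compensating each such internal vertex by a leaf produced inside its blob. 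Equivalently, the whole count can be run through the ``prescribed leaves'' criterion — find $S\subseteq V(G)$ with $|S|=d+k-1$, $G-S$ connected and non-empty, and every vertex of $S$ with a neighbour outside $S$, taking $S$ to be $k$ deletable vertices of one blob together with as many vertices of $A$ (or private neighbours thereof) as the connectivity of $G-S$ allows — and verifying $|S|\ge d+k-1$ in every configuration is the technical heart of the proof.
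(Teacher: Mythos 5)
Your reductions to the connected case, the observation that every $u\in A\setminus\{r\}$ has degree one in $F$ (since $V(C)\cap A=\{r\}$ and the interiors of the paths $P_u$ avoid $A$), and the fact that a $k$-connected graph has a spanning tree with at least $k$ leaves are all correct, and they do give $|A|-1$ forced leaves plus $k$ more \emph{provided the two gadgets do not interfere}. But the interference you flag at the end is not a deferrable technicality; it is the entire content of the theorem, and neither of your proposed fixes goes through. Choosing $L$ disjoint from $\bigcup_u V(P_u)$ can be outright impossible: if $C=K_{k+1}$ and every neighbour of $r$ in $G$ lies in $C$, then $L$ is forced to equal $V(C)\setminus\{r\}$ and every $P_u$ must pass through $L$. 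Worse, in such a configuration your certificate $F$ provably certifies too few leaves. Take $k=3$, $A=\{r,u_1,u_2,u_3\}$, $C=K_4$ on $\{r,v_1,v_2,v_3\}$, and suppose each $P_{u_i}$ leaves $r$ through a distinct $v_i$ and the paths are otherwise disjoint. Every spanning tree of $F$ must keep the three $v_i$ and all path interiors internal, so $\ell(F)=4$, whereas the theorem requires a witness for $\ell(G)\ge d+k-1=6$. Monotonicity applied to this $F$ cannot close that gap; the missing leaves live outside $F$.

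The extra leaves have to be extracted from the \emph{pair} realizers $H_w$: being $k$-connected, each contains $k$ independent $w$--$r$ paths by Menger's theorem, and it is exactly this slack that the paper exploits and your single-path-per-element construction throws away. The paper keeps all $k$ Menger paths for each $w$, classifies $w$ as an upper or lower leaf according to whether some path avoids the star $\{u_1r,\dots,u_kr\}$ at the root, eliminates upper leaves by an extremal argument (\Cref{upper}), and finally re-roots the tree at $u_1$ so that $r$ and the remaining $u_i$ become leaves. So the ``technical heart'' you postpone is not a verification of your construction in all configurations --- it requires replacing the construction itself.
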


\begin{proof} Let $A$ be a shattered set of maximum cardinality. Since our aim is to construct a spanning tree with at least $\left|A\right| + k - 1$ leaves, it would be enough to construct a tree $T \subseteq G$ with as many leaves. For $w \in A$, we denote by $G_{w}$ a fixed $k$-connected subgraph such that $V(G_{w}) \cap A = \{w\}$. Similarly, $G_{ww'}$ denotes a fixed $k$-connected subgraph such that $V(G_{ww'}) \cap A = \{w, w'\}$. We choose a vertex $r \in A$ having the minimum number of neighbours in $V(G) \setminus A$ as the root of $T$. Clearly, $d_{G_{r}}(r) \geq k$. Let $U = \{u_{1}, \dots, u_{k}\}$ be a set of $k$ arbitrary vertices in $N_{G_{r}}(r)$. We select the edges $u_{1}r, \dots, u_{k}r$. By Menger's Theorem \cite{West}, there exist $k$ independent $w-r$ paths in $G_{wr}$. We call $w \in A \setminus \{r\}$ a \textit{lower leaf} for $T$ if there exist $k$ independent $w-r$ paths $\left\{P_{1}, \dots, P_{k}\right\} \subseteq G$ with no inner vertex in $A$ and such that each of them contains (exactly) one edge in $\left\{u_{1}r, \dots, u_{k}r\right\}$ (in other words, $w$ is a lower leaf if there exists a $w, U$-fan in $G - (A \setminus \{w\})$). In particular, no path contains two vertices in $U$. Otherwise, we call $w$ an \textit{upper leaf} for $T$. 

We set $L := \left\{u_{1}, \dots, u_{k}\right\}$ and we view $L$ as the set of potential leaves for $T$. For any $w \in A \setminus \{r\}$, we do the following (see \Cref{fig:uplow}): 
\begin{itemize}
    \item If $w$ is an upper leaf, select a $w-r$ path $P \subseteq G_{wr}$ such that $V(P) \cap U = \varnothing$. Such a path exists: by Menger's Theorem \cite{West}, there exist $k$ independent $w-r$ paths in $G_{wr}$ and, if each of them contained a (different) vertex in $U$, we would obtain a $w, U$-fan in $G - (A \setminus \{w\})$. Finally, add $w$ to $L$ and remove cycles and appropriate edges so that the selected subgraph is a tree;
    \item If $w$ is a lower leaf, select a $w-r$ path $P \subseteq G$ as in the definition of lower leaf and such that $E(P) \cap \left\{u_{1}r, \dots, u_{k}r\right\} = \{u_{1}r\}$, add $w$ to $L$ and remove $u_{1}$ from $L$. Finally, remove edges from the newly added paths so that the selected subgraph is a tree.
\end{itemize}

\begin{figure}[h!]
\centering
\includegraphics[scale=0.4]{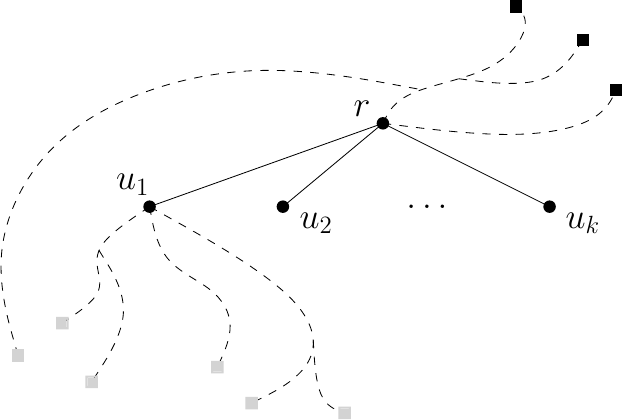}
\caption{The black square vertices are the upper leaves, while the gray square vertices are the lower leaves. The selected paths are dashed.}
\label{fig:uplow}
\end{figure}

In this way, we get a tree $T$ rooted at $r$ and in which the elements of $L$ are leaves. Moreover, $|L| \geq \left|A\right| + k - 2$. The construction works for any $k$, the case $k = 1$ giving the upper bound in \Cref{con}. Now we consider $k \geq 2$. The following list of claims gives some properties a counterexample to \Cref{kcon} would have.  

\begin{claim}\label{adj}
Each leaf in $T$ is adjacent to at most one vertex not in $T$. 
\end{claim}

Indeed, if there exists a leaf in $T$ adjacent to at least two vertices not in $T$, we immediately get a tree with at least $\left|A\right| + k - 1$. \openbox

\begin{claim}\label{upperleaf} $T$ contains at least one upper leaf.
\end{claim}

Indeed, suppose $T$ has no upper leaves. For any $2 \leq i \leq k$, select a $u_{i}-u_{1}$ path in $G_{r}$ with no inner vertex in $(U \setminus \{u_{1}, u_{i}\}) \cup \{r\}$. Clearly, such a path has no inner vertex in $A \cup U$. But then we can obtain a tree, rooted at $u_{1}$, with at least $\left|A\right| + k - 1$ leaves ($r$ becomes an additional leaf). \openbox

\begin{claim}\label{length2} There exists an upper leaf $w$ with $d_{T}(w, r) \geq 2$.
\end{claim}

Indeed, suppose this is not the case. By \Cref{upperleaf}, the set $W$ of upper leaves for $T$ is non-empty and each of them is adjacent to $r$. Suppose that any $w \in W$ has one neighbour (in $G$) which is contained in $T_{u_{1}} - A$, where $T_{u_{1}}$ denotes the subtree induced by $u_{1}$ and its descendants. Then we select the edges joining each upper leaf to $T_{u_{1}} - A$ and, for any $2 \leq i \leq k$, we select a $u_{i}-u_{1}$ path in $G_{r}$ with no inner vertex in $A \cup U$. In this way, we obtain a tree rooted at $u_{1}$ and with at least $|A| + k - 1$ leaves, a contradiction. Therefore, there exists $w \in W$ such that $N_{G}(w) \cap (V(T_{u_{1}}) \setminus A) = \varnothing$. By \Cref{adj}, $w$ has at most one neighbour in $V(G) \setminus V(T)$. Therefore, $N_{G_{w}}(w) \subseteq N_{G}(w) \setminus A \subseteq \{u_{2}, \dots, u_{k}\} \cup \{x\}$, for some $x \in V(G) \setminus V(T)$. But since $w$ has at least $k$ neighbours in $G_{w}$, we have $N_{G_{w}}(w) = N_{G}(w) \setminus A = \{u_{2}, \dots, u_{k}\} \cup \{x\}$. 

Consider now $w' \in A$ such that $ww' \notin E(G)$. There exists a $w-w'$ path $P$ in $G_{ww'}$ with no inner vertex in $U \setminus \{u_{1}\}$. Clearly, $wx \in E(P)$. Note that $P$ does not contain any vertex in $T_{u_{1}} - A$, or else $x$ could become an additional leaf for $T$. By selecting these paths, together with edges connecting vertices in $A$ to $w$, it is easy to see we can get a new tree $T'$ rooted at $w$ and with at least $|A| + k - 2$ leaves. Moreover, if an upper leaf for $T$ has at least two neighbours in the subtree $T_{u_{1}} - A$, then we can get an additional leaf for $T'$. Therefore, each upper leaf for $T$ has at most one neighbour in $T_{u_{1}} - A$. 

Now let $w' \neq w$ be an upper leaf for $T$. We claim that $w'$ is adjacent to $u_{2}$. Indeed, suppose $w'u_{2} \notin E(G)$. By \Cref{adj}, $w'$ has at most one neighbour in $G - V(T)$ and by the paragraph above it has at most one neighbour in $T_{u_{1}} - A$. Since $d_{G_{w'}}(w') \geq k$, then $N_{G_{w'}}(w') = (U \setminus \{u_{1}, u_{2}\}) \cup \{y, z\}$, for some $y \in V(G) \setminus V(T)$ and $z \in V(T_{u_{1}}) \setminus A$. Moreover, $u_{2} \notin V(G_{w'})$. Otherwise, there exists a $w'-u_{2}$ path $P$ in $G_{w'}$ with no inner vertex in $(U \setminus \{u_{1}, u_{2}\}) \cup \{z\}$. Clearly, $w'y \in E(P)$ and $P$ does not contain vertices in $T_{u_{1}} - A$, or else $y$ would become an additional leaf for $T$. But then there exists a $w', U$-fan in $G - (A \setminus \{w'\})$, contradicting the fact that $w'$ is an upper leaf. Therefore, since $u_{2} \notin V(G_{w'})$, there exists a $y-z$ path in $G_{w'}$ with no inner vertex in $A \cup (U \setminus \{u_{1}\})$. Again, by adding this path to the initial tree $T$, we get a new tree with at least $|A| + k - 1$ leaves, a contradiction. This means that $w'u_{2} \in E(G)$, for any upper leaf $w'$. On the other hand, for any lower leaf $w''$, there exists a $w''-u_{2}$ path with no inner vertex in $A \cup U$. Finally, for any $u_{i} \in U \setminus \{u_{2}\}$, there exists a $u_{i}-u_{2}$ path in $G_{r}$ with no inner vertex in $A \cup U$. But then it is easy to construct a new tree rooted at $u_{2}$ and with at least $|A| + k - 1$ leaves. \openbox 

\begin{claim}\label{onepath} $r$ has at most one neighbour in $V(T) \setminus (A \cup U)$.
\end{claim}

Indeed, suppose $r$ has at least two neighbours in $V(T) \setminus (A \cup U)$. The subtree $T - (\{u_{2}, \dots, u_{k}\} \cup (A \setminus \{r\}))$ contains a leaf $q$ and let $Q = N_{T}(q) \cap (A \setminus \{r\})$. Let $w \in Q$ and consider $G_{w}$. Clearly, $w$ has at least $k - 1 \geq 1$ neighbours in $V(G) \setminus (A \cup \{q\})$. If for any $w \in Q$, one of these neighbours is contained in $V(T) \setminus (\{u_{2}, \dots, u_{k}\} \cup \{q\})$, then we get a tree rooted at $r$ and in which $q$ is an additional leaf. Therefore, there exists $w \in Q$ with no neighbours in $V(T) \setminus (\{u_{2}, \dots, u_{k}\} \cup \{q\})$. By \Cref{adj}, $w$ has at most one neighbour in $V(G) \setminus V(T)$. But then $w$ has at most $k + 1$ neighbours in $V(G) \setminus A$, contradicting the minimality of $r$. \openbox \\

By \Cref{onepath} and \Cref{length2}, we may assume $r$ has exactly $k + 1$ neighbours in $V(T) \setminus A$. Let $N_{T}(r) \setminus (A \cup U) = \{z\}$ and let $T_{z}$ be the subtree induced by $z$ and its descendants. The following is immediate. 

\begin{fact} $V(T_{u_{1}}) \cap V(T_{z}) = \varnothing$.
\end{fact}

\begin{claim}\label{upperz} No leaf of $T_{z}$ is a lower leaf for $T$.
\end{claim}

Indeed, suppose there exists a lower leaf $w$ for $T$ which is a leaf of $T_{z}$. This means that the $w, u_{1}$-subpath $P$ in the definition of $w$ intersects $T_{z}$. Then, select the $u_{1}, x$-subpath of $P$, where $x$ is the first intersection of $P$ with $T_{z}$, when traversed from $u_{1}$. Moreover, for any $u_{i} \in U \setminus \{u_{1}\}$, select a $u_{i}-u_{1}$ path in $G_{r}$ with no inner vertex in $A \cup U$ and remove the edge set $\{u_{2}r, \dots, u_{k}r, rz\}$. After removing cycles and appropriate edges from the selected subgraph, we get a tree $T'$ rooted at $u_{1}$ and with at least $\left|A\right| + k - 2$ leaves. Finally, by \Cref{adj} and minimality of $r$, any upper leaf for $T$ adjacent to $r$ has a neighbour in $V(T') \setminus (U \setminus \{u_{1}\} \cup A)$ and so $r$ could become an additional leaf for $T'$. \openbox \\   

Clearly, $T_{z} - A$ contains a leaf $q$ and let $Q = N_{T}(q) \cap A$. By \Cref{upperz}, $Q$ is a set of upper leaves. By an argument similar to the one in the proof of \Cref{onepath}, there exists $w' \in Q$ such that $N_{G_{w'}}(w') \subseteq N_{G}(w') \setminus A \subseteq (U \setminus \{u_{1}\}) \cup \{q, x\}$, for some $x \in V(G) \setminus V(T)$, and $w'$ is adjacent to at least $k - 2$ vertices in $U \setminus \{u_{1}\}$ (see \Cref{fig:initree}). Moreover, $U \setminus \{u_{1}\} \subseteq V(G_{w'})$, or else $\{q, x\} \subseteq V(G_{w'})$ and there would exists an $x-q$ path in $G_{w'}$ with no inner vertex in $A \cup (U \setminus \{u_{1}\})$, thus turning $x$ into an additional leaf for $T$.

\begin{figure}[h!]
\centering
\hspace*{\fill}
\begin{subfigure}[b]{0.3\textwidth}
\centering
\includegraphics[scale=0.4]{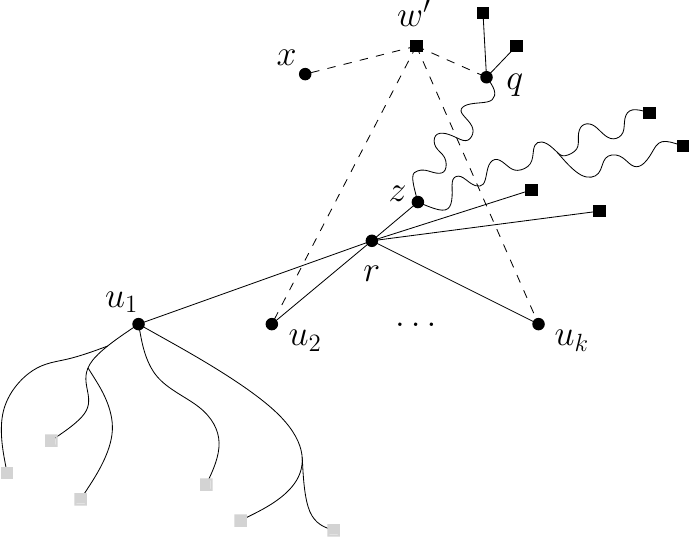}
\caption{}
\label{fig:initree}
\end{subfigure}
\hfill
\begin{subfigure}[b]{0.3\textwidth}
\centering
\includegraphics[scale=0.4]{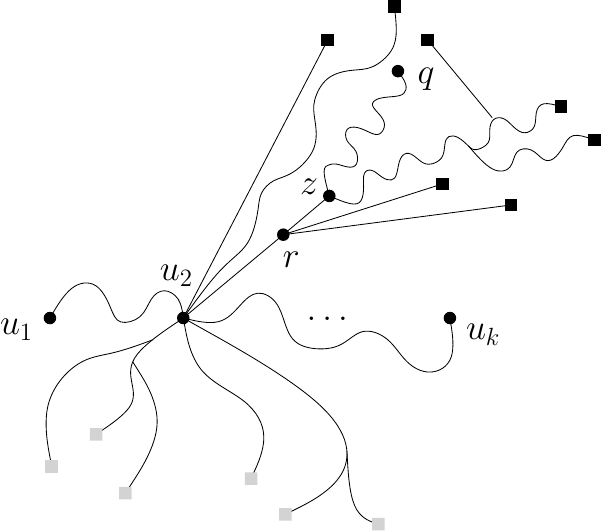}
\caption{}
\label{fig:endtree1}
\end{subfigure}
\hfill
\begin{subfigure}[b]{0.3\textwidth}
\centering
\includegraphics[scale=0.4]{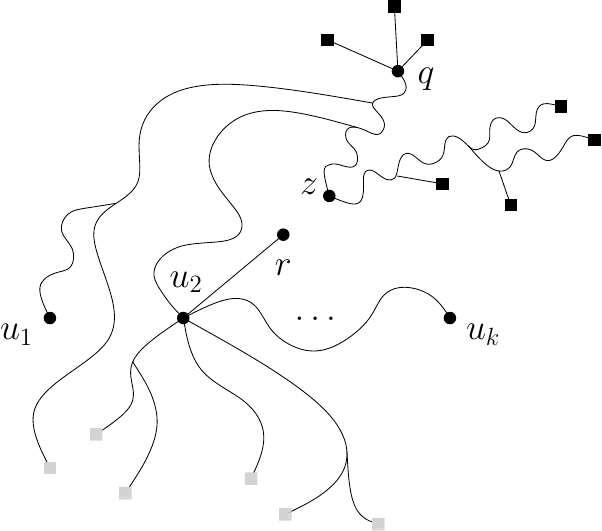}
\caption{}
\label{fig:endtree}
\end{subfigure}
\hspace*{\fill}
\caption{\subref{fig:initree} Neighbourhood of $w'$. \subref{fig:endtree1} and \subref{fig:endtree} The different constructions of a tree $T$ with at least $\left|A\right| + k - 1$ leaves.} 
\end{figure} 

\begin{claim}\label{path} There is no path $P$ from $w'$ to $v \in V(T_{u_{1}}) \setminus A$ with no inner vertex in $A \cup (U \setminus \{u_{1}\})$.
\end{claim} 

Suppose such a path $P$ exists. Then $w'x \notin E(P)$, or else $x$ becomes an additional leaf. Therefore, $w'q \in E(P)$. By the paragraph above, $w'$ is adjacent to at least $k - 2$ vertices in $U \setminus \{u_{1}\}$, say $\{w'u_{3}, w'u_{4}, \dots, w'u_{k}\} \subseteq E(G)$, and there exists a $w'-u_{2}$ path $P' \subseteq G_{w'}$ with no inner vertex in $(U \setminus \{u_{1}, u_{2}\}) \cup \{q\}$. If $w'u_{2} \in E(G)$, then there exists a $w', U$-fan in $G - (A \setminus \{w'\})$, contradicting the fact that $w'$ is an upper leaf. Therefore, $w'u_{2} \notin E(G)$ and $w'x \in E(P')$. But then $P + P(v)$, where $P(v)$ is the unique $v-u_{1}$ path in $T$, and $P'$ do not intersect in an inner vertex, or else $x$ becomes an additional leaf, and once again we obtain a $w', U$-fan in $G - (A \setminus \{w'\})$, a contradiction. \openbox  

\begin{claim}\label{oneneighb} Each $w \in A$ has at most one neighbour in $T_{u_{1}} - A$.
\end{claim}

Consider $w \in A \setminus \{w'\}$ such that $w'w \notin E(G)$. There exists a $w'-w$ path $P$ in $G_{w'w}$ with no inner vertex in $A \cup (U \setminus \{u_{1}\})$. Moreover, by \Cref{path}, $P$ does not contain any vertex of $T_{u_{1}} - A$. For any $w \in A \setminus \{w'\}$ such that $w'w \notin E(G)$, we select these paths. For any $w \in A \setminus \{w'\}$ such that $w'w \in E(G)$, we select the corresponding edges. Moreover, recall that $w'$ is adjacent to at least $k - 2$ vertices in $U \setminus \{u_{1}\}$, say $\{u_{3}, \dots, u_{k}\}$, and $U \setminus \{u_{1}\} \subseteq V(G_{w'})$. But then there exists a $w'-u_{2}$ path in $G_{w'}$ with no inner vertex in $A \cup U$ and so, by \Cref{path}, with no inner vertex in $T_{u_{1}} - A$ as well. Therefore, we can obtain a new tree $T'$ rooted at $w'$ and with at least $|A| + k - 2$ leaves. But then each $w \in A$ has at most one neighbour in $T_{u_{1}} - A$, or else we could get an additional leaf for $T'$. \openbox

\begin{fact}\label{fan} Let $w$ be an upper leaf. Then $((U \setminus \{u_{1}\}) \cup \{v\}) \nsubseteq V(G_{w})$, for any $v \in V(T_{u_{1}})$.
\end{fact}

Indeed, suppose this is not the case and let $v \in V(T_{u_{1}})$ be a vertex with minimum $d_{T}(v, u_{1})$ among those satisfying $((U \setminus \{u_{1}\}) \cup \{v\}) \subseteq V(G_{w})$. Since the set $U' = (U \setminus \{u_{1}\}) \cup \{v\}$ is contained in $V(G_{w})$ then, by the Fan Lemma \citep{West}, there exists a $w, U'$-fan in $G_{w}$. But then no path in the fan intersects the unique $v-u_{1}$ path in $T$ in an inner vertex and so we can obtain a $w, U$-fan in $G - (A \setminus \{w\})$, contradicting the fact that $w$ is an upper leaf. \openbox

\begin{claim}\label{empt} $V(G_{w}) \cap V(T_{u_{1}}) = \varnothing$, for any $w \in Q$.
\end{claim}

Indeed, suppose there exists $w \in Q$ such that $z \in V(G_{w}) \cap V(T_{u_{1}})$. By \Cref{fan}, we have $U \setminus \{u_{1}\} \nsubseteq V(G_{w})$. By \Cref{oneneighb}, $w$ has at most one neighbour in $T_{u_{1}} - A$ and, by \Cref{adj}, $w$ has at most one neighbour in $V(G) \setminus V(T)$. But then there exists $v \in V(G_{w}) \cap V(T_{z})$, or else $N_{G_{w}}(w) = (U \setminus \{u_{1}, u_{i}\}) \cup \{x, y\}$, for some $x \in V(G) \setminus V(T)$, $y \in V(T_{u_{1}}) \setminus A$ and $2 \leq i \leq k$, and we could find an $x-y$ path in $G_{w}$ with no inner vertex in $A \cup U$. Therefore, there exists a $v-z$ path $P \subseteq G_{w}$ with no inner vertex in $A \cup (U \setminus \{u_{1}\})$, contradicting \Cref{path}. \openbox \\

Consider now $w \in Q$. By \Cref{empt} and \Cref{adj}, if $V(G_{w}) \cap (V(T_{z}) \setminus \{q\}) = \varnothing$, then it must be $N_{G_{w}}(w) \subseteq (U \setminus \{u_{1}\}) \cup \{q, x\}$, for some $x \in V(G) \setminus V(T)$. Moreover, as we have seen before, $U \setminus \{u_{1}\} \subseteq V(G_{w})$, or else $x$ could become an additional leaf for $T$. Therefore, we proceed as follows. We start by removing $V(T_{u_{1}}) \cup U$ from $T$. 

Suppose now that, for every lower leaf $w''$, the $w'', u_{2}$-subpath $P$ in the definition of $w''$ contains no vertices in $T_{z}$ and, for every $u_{i} \in U \setminus \{u_{2}\}$, each $u_{i}-u_{2}$ path $P'$ in $G_{r}$ with no inner vertex in $A \cup U$ contains no vertices in $T_{z}$. In particular, these paths do not contain $q$ and we select them (see \Cref{fig:endtree1}). Moreover, for any $w \in Q$, we do the following. If there exists $v \in V(G_{w}) \cap (V(T_{z}) \setminus \{q\})$, then we select a $w-v$ path in $G_{w}$ with no inner vertex in $A \cup (U \setminus \{u_{2}\}) \cup \{q\}$ (such a path exists by \Cref{empt}). Otherwise, by the paragraph above, we select a $w-u_{2}$ path in $G_{w}$ with no inner vertex in $A \cup U \cup \{q\}$ (again, the existence follows from \Cref{empt}). After removing cycles and appropriate edges we get a new tree $T$, rooted at $u_{2}$ and with at least $|A| + k - 1$ leaves ($q$ becomes an additional leaf).  

Therefore, we may assume there exists either a lower leaf $\overline{w}$ such that the $\overline{w}, u_{2}$-subpath $P$ in the definition of $\overline{w}$ contains vertices in $T_{z}$, or a $u_{i} \in U \setminus \{u_{2}\}$ such that a $u_{i}-u_{2}$ path $P'$ in $G_{r}$ with no inner vertex in $A \cup U$ contains vertices in $T_{z}$. Then we select either the $\overline{w}, x$-subpath and the $y, u_{2}$-subpath of $P$, where $x$ and $y$ are, respectively, the first and the last intersection of $P$ with $T_{z}$ when traversed from $\overline{w}$, or the $u_{i}, x'$-subpath and the $y', u_{2}$-subpath of $P'$, where $x'$ and $y'$ are, respectively, the first and the last intersection of $P'$ with $T_{z}$ when traversed from $u_{i}$ (see \Cref{fig:endtree}). In this way, we get a new tree $T'$ and we grow it as follows. Let $w''$ be a lower leaf and $P_{w''}$ be the $w'', u_{2}$-subpath in the definition of $w''$. For any lower leaf $w''$, we select the $w'', x_{w''}$-subpath of $P_{w''}$, where $x_{w''}$ is the first intersection of $P_{w''}$ with the tree constructed so far and in which $w''$ becomes a leaf. Similarly, we add the remaining vertices of $U$ as leaves. Finally, consider an upper leaf $w''$ for the original $T$ such that $d_{T}(w'', r) = 1$. By minimality of $r$ and by an argument similar to \Cref{adj}, $w''$ is adjacent to a vertex in $V(T') \setminus (A \cup U \setminus \{u_{2}\})$. But then we can add these edges to the tree $T'$ rooted at $u_{2}$ in order to obtain a new tree with at least $|A| + k - 1$ leaves ($r$ becomes an additional leaf). 
\end{proof}

Our bound is tight in the sense of the following.

\begin{proposition}\label{tight} For any $k \geq 2$ and $x \geq 2k$, there exists a graph $G$ with $\ell(G) = x$ and $\textsc{VC}_{k-\mbox{\scriptsize{\normalfont{con}}}}(G) = \ell(G) - k + 1$.
\end{proposition}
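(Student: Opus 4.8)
The plan is to verify, as suggested, that equality is attained by $G = K_{k+1}$, by computing the two sides separately. For the right-hand side, $\ell(K_{k+1}) = k$: the spanning star $K_{1,k}$ is a spanning tree of $K_{k+1}$ with $k$ leaves, while every tree on $k+1 \geq 3$ vertices has at least one internal vertex and hence at most $k$ leaves. So $\ell(K_{k+1}) - k + 1 = 1$, and it remains to show $\textsc{VC}_{k-\mbox{\scriptsize{\normalfont{con}}}}(K_{k+1}) = 1$.

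The key observation is that $K_{k+1}$ is the \emph{only} $k$-connected subgraph of itself having a non-empty vertex set: a $k$-connected graph has at least $k+1$ vertices, so any such subgraph $H$ is spanning; and since the connectivity of $H$ is bounded by its minimum degree, $H$ has minimum degree at least $k$, which for a spanning subgraph of $K_{k+1}$ forces $H = K_{k+1}$. Consequently, for any vertex subset $A$ the collection of all sets $V(H) \cap A$, as $H$ ranges over the $k$-connected subgraphs of $K_{k+1}$, equals $\{\emptyset, A\}$ --- the empty trace being realised by the empty subgraph --- and $\{\emptyset, A\}$ coincides with $2^{A}$ precisely when $|A| \leq 1$. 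Hence every singleton is shattered by $k$-connected subgraphs while no two-element set is, so $\textsc{VC}_{k-\mbox{\scriptsize{\normalfont{con}}}}(K_{k+1}) = 1 = \ell(K_{k+1}) - k + 1$, witnessing tightness of \Cref{kcon}.

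I do not expect a genuine obstacle: the claim only asks for a single extremal example, so there is nothing to optimise over, and both quantities are short computations. The two points that deserve a little care are the realisation of the empty trace in the lower bound --- handled by the empty subgraph, consistent with the convention under which $\textsc{VC}_{k-\mbox{\scriptsize{\normalfont{con}}}}$ is defined --- and the one-line minimum-degree argument showing that deleting even a single edge from $K_{k+1}$ already destroys $k$-connectivity, which is exactly what collapses the induced set system to $\{\emptyset, A\}$ and pins the VC-dimension at $1$.
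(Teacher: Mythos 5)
Your proof is correct and takes exactly the approach the paper intends: the paper gives no details beyond naming $K_{k+1}$ as the witness, and your computations of $\ell(K_{k+1}) = k$ and $\textsc{VC}_{k-\mbox{\scriptsize{\normalfont{con}}}}(K_{k+1}) = 1$ (via the minimum-degree observation that $K_{k+1}$ is its only nonempty $k$-connected subgraph) correctly fill in what is left implicit. The one point you rightly flag --- realising the empty trace --- is a convention the paper's definition must implicitly grant for the proposition to hold at all, and you resolve it accordingly.
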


For the proof we need the following elementary result which will be used in the upcoming sections as well.

\begin{lemma}[Expansion Lemma \citep{West}] If $G$ is a $k$-connected graph and $G'$ is obtained from $G$ by adding a new vertex with at least $k$ neighbours in $G$, then $G'$ is $k$-connected.
\end{lemma}

\begin{proof}[Proof of \Cref{tight}] For a fixed $k \geq 2$ and $x = 2k$, consider the graph $G_{k}$ constructed as follows. Start with a clique $H_{k}$ of size $k + 1$. For each subset $S \subset H_{k}$ of size $k$, add a vertex adjacent to precisely the vertices of $S$, and let $A$ be the set of the added vertices. Clearly, $\left|V(G_{k})\right| = 2k + 2$ and $\ell(G_{k}) = 2k$. Moreover, by the Expansion Lemma, $A$ is shattered. For $x \geq 2k$, let $G$ be the graph obtained from $G_{k}$ by adding $x - 2k$ vertices adjacent to exactly $k$ vertices of $H_{k}$. It is easy to see that $\ell(G) = x$ and $\textsc{VC}_{k-\mbox{\scriptsize{\normalfont{con}}}}(G) = \ell(G) - k + 1$.      
\end{proof}

As for a lower bound, we use the fact that having a sufficiently large complete subgraph guarantees shattering by $k$-connected subgraphs.

\begin{theorem}\label{lowerbound} Let $G$ be a connected graph of order $n$, size $m$, and maximum degree $\Delta$. For $k \geq 2$, $$\textsc{VC}_{k-\mbox{\scriptsize{\normalfont{con}}}}(G) \geq \ell(G) - k + 1 - \left(n + 2 - \left\lceil\frac{n - 2}{\Delta - 1}\right\rceil - \frac{n^{2}}{n^{2} - 2m}\right).$$
\end{theorem}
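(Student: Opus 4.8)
The plan is to bound $\textsc{VC}_{k-\mbox{\scriptsize{\normalfont{con}}}}(G)$ from below by the clique number $\omega:=\omega(G)$ and to absorb the rest into two classical estimates. We may assume $G$ is connected (so that $\ell(G)$ is defined) and that the family of $k$-connected subgraphs of $G$ is nonempty; otherwise $\textsc{VC}_{k-\mbox{\scriptsize{\normalfont{con}}}}(G)$ is undefined and there is nothing to prove. Picking any $k$-connected subgraph $H$, we then have $n\ge|V(H)|\ge k+1$ and $\Delta\ge\delta(H)\ge k\ge 2$, so all the quantities in the statement are well defined. The first and only non-routine ingredient is that a large clique already forces a large shattered set: \emph{if $\omega\ge k+1$, then $\textsc{VC}_{k-\mbox{\scriptsize{\normalfont{con}}}}(G)\ge\omega-k-1$}. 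The second ingredient is Tur\'an's inequality applied to the complement $\overline{G}$ (equivalently, the Caro--Wei bound), which gives $\omega=\alpha(\overline{G})\ge\frac{n^{2}}{n^{2}-2m}$. The third is the standard bound for maximum-leaf spanning trees: any spanning tree $T$ of $G$ has $\Delta(T)\le\Delta$, so $2(n-1)=\sum_{v}d_{T}(v)\le\ell(G)+\Delta(n-\ell(G))$, whence $\ell(G)\le\frac{(\Delta-2)n+2}{\Delta-1}=n-\frac{n-2}{\Delta-1}$ and, by integrality, $\ell(G)\le n-\left\lceil\frac{n-2}{\Delta-1}\right\rceil$.

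Granting these three facts, the theorem follows from the chain
\[
\ell(G)-k+1-\left(n+2-\left\lceil\frac{n-2}{\Delta-1}\right\rceil-\frac{n^{2}}{n^{2}-2m}\right)=\left(\ell(G)-n+\left\lceil\frac{n-2}{\Delta-1}\right\rceil\right)+\frac{n^{2}}{n^{2}-2m}-k-1\le\frac{n^{2}}{n^{2}-2m}-k-1\le\omega-k-1\le\textsc{VC}_{k-\mbox{\scriptsize{\normalfont{con}}}}(G),
\]
whose first two inequalities (the spanning-tree bound and Tur\'an) hold unconditionally, while the last is the clique bound. If the theorem's right-hand side (the left-most quantity above) is positive, then the first two inequalities force $\omega-k-1>0$, i.e.\ $\omega\ge k+2$, so the clique bound applies; and if it is non-positive there is nothing to prove, since $\textsc{VC}_{k-\mbox{\scriptsize{\normalfont{con}}}}(G)\ge 0$ (the family of $k$-connected subgraphs being nonempty).

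It remains to prove the clique bound. Fix a clique on a vertex set $D$ with $|D|=\omega$, choose $A\subseteq D$ with $|A|=\omega-k-1$, and put $B:=D\setminus A$, so that $|B|=k+1$. We show $A$ is shattered by $k$-connected subgraphs. Let $S\subseteq A$. If $|S|\ge k+1$, then $G[S]$ is a complete graph on at least $k+1$ vertices, hence $k$-connected, and clearly $V(G[S])\cap A=S$. If $|S|\le k$, choose any $X\subseteq B$ with $|X|=k+1-|S|$ — possible since $|B|=k+1$ — so that $G[S\cup X]$ is a complete graph on exactly $k+1$ vertices, hence $k$-connected, and $V(G[S\cup X])\cap A=(S\cup X)\cap A=S$ because $X\cap A=\emptyset$. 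Thus every subset of $A$ arises as the intersection of $A$ with the vertex set of a $k$-connected subgraph, so $\textsc{VC}_{k-\mbox{\scriptsize{\normalfont{con}}}}(G)\ge|A|=\omega-k-1$.

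The only real subtlety is fixing the additive constant in the clique bound: it is the subset $S=\emptyset$ that forces $|B|\ge k+1$ — a complete graph on merely $k$ vertices is $(k-1)$-connected, not $k$-connected — and this is precisely why the constant is ``$-k-1$'' rather than a weaker ``$-k$''. The remaining two ingredients are classical and need no new idea. The bound is moreover tight: for $G=K_{n}$ one has $\Delta=n-1$, $\ell(G)=n-1$, $\left\lceil\frac{n-2}{\Delta-1}\right\rceil=1$ and $\frac{n^{2}}{n^{2}-2m}=n$, so the right-hand side equals $n-k-1=\textsc{VC}_{k-\mbox{\scriptsize{\normalfont{con}}}}(K_{n})$.
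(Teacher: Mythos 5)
Your proposal is correct and follows essentially the same route as the paper: Tur\'an's theorem yields a clique of size at least $\frac{n^{2}}{n^{2}-2m}$, a clique of size $\omega$ gives a shattered set of size $\omega - k - 1$, and the degree-sum bound $\ell(G) \leq n - \left\lceil\frac{n-2}{\Delta-1}\right\rceil$ converts the result into the stated form. You merely fill in details the paper leaves implicit (the explicit shattering of a subset of a clique, the degenerate case where the right-hand side is non-positive, and a direct derivation of the spanning-tree bound in place of the paper's integer program).
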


\begin{proof} By Tur\'{a}n's Theorem \cite{West}, if $m > \left(1 - \frac{1}{r}\right)\frac{n^{2}}{2}$, then $G$ contains $K_{r+1}$ as a subgraph. Therefore, by the Expansion Lemma, a set of size $r + 1 - (k + 1)$ can be shattered by $k$-connected subgraphs. The condition above is equivalent to $r < \frac{n^{2}}{n^{2} - 2m}$ and so, taking $r = \left\lceil\frac{n^{2}}{n^{2} - 2m} - 1\right\rceil$, we get $$\textsc{VC}_{k-\mbox{\scriptsize{\normalfont{con}}}}(G) \geq \left\lceil\frac{n^{2}}{n^{2} - 2m} - 1\right\rceil + 1 - (k + 1).$$ Let $T$ be a spanning tree of $G$ and $d_{i} = \left|\left\{v \in V(T) : d_{T}(v) = i\right\}\right|$. We want to find an upper bound for $\ell(G)$. We have that $$\sum^{\Delta}_{i=1}d_{i} = n \ \ \ \mbox{and} \ \ \ 2(n - 1) = \sum_{v \in V(T)}d_{T}(v) = \sum^{\Delta}_{i=1}id_{i}.$$ Using the two relations above, it is easy to see that $$n - d_{1} = \sum^{\Delta}_{i=2}d_{i} \geq \sum^{\Delta}_{i=2}\frac{i - 1}{\Delta - 1}d_{i} = \frac{1}{\Delta - 1}\sum^{\Delta}_{i=2}(i - 1)d_{i} = \frac{n - 2}{\Delta - 1}.$$ Since $\ell(G)$ is the maximum of $d_{1}$ taken over all spanning trees of $G$, then $$\ell(G) \leq n - \left\lceil\frac{n - 2}{\Delta - 1}\right\rceil.$$ Summarizing, we get 
\begin{align*}\textsc{VC}_{k-\mbox{\scriptsize{\normalfont{con}}}}(G) &\geq \left\lceil\frac{n^{2}}{n^{2} - 2m} - 1\right\rceil - k \\
                                   &\geq \frac{n^{2}}{n^{2} - 2m} - 1 - k + \left(\ell(G) - n + \left\lceil\frac{n - 2}{\Delta - 1}\right\rceil\right) \\
                                   &\geq \ell(G) - k + 1 - \left(n + 2 - \left\lceil\frac{n - 2}{\Delta - 1}\right\rceil - \frac{n^{2}}{n^{2} - 2m}\right).                                
\end{align*}
\end{proof}

Note that, in \Cref{lowerbound}, equality is attained by complete graphs.

\section{The decision problem}\label{third}

In this section we investigate the computational complexity of \textsc{Graph} $\textsc{VC}_{k-\mbox{\scriptsize{\normalfont{con}}}}$ \textsc{Dimension}. Consider the following decision problem, usually called \textsc{Set Multicover}:

\begin{center}
\fbox{%
\begin{minipage}{5.5in}
\textsc{Set Multicover}
\begin{description}[\compact\breaklabel\setleftmargin{70pt}]
\item[Instance:] A set $S = \left\{a_{1}, \dots, a_{n}\right\}$, a collection of subsets $S_{1}, \dots, S_{m} \subseteq S$, and integers $k$ and $t$.
\item[Question:] Is there an index set $I \subseteq \left\{1, \dots, m\right\}$ such that $\bigcup_{i \in I}S_{i} = S$, each $a_{i}$ is covered by at least $k$ distinct subsets, and $\left|I\right| \leq t$?  
\end{description}
\end{minipage}}
\end{center}	

Being a generalization of the well-known \textsc{Set Cover} (also known as \textsc{Minimum Cover} \cite{GJ79}), it is $\mathsf{NP}$-complete. We use it in the proof of the following \nameCref{reduction}. Recall that a \textit{split graph} is a graph in which the vertex set can be partitioned into a clique and an independent set. 

\begin{theorem}\label{reduction} \textsc{Graph} $\textsc{VC}_{k-\mbox{\scriptsize{\normalfont{con}}}}$ \textsc{Dimension} is $\mathsf{NP}$-complete even for split graphs.
\end{theorem}

\begin{proof} First we show that the problem is in $\mathsf{NP}$. Our proof is based on the following elementary \nameCref{union}. Since we could not find it in the literature, we give its short proof.

\begin{lemma}\label{union} Let $G$ and $G'$ be two $k$-connected graphs such that $\left|V(G) \cap V(G')\right| \geq k$. Then $G \cup G'$ is $k$-connected as well.
\end{lemma}

\begin{proof}[Proof of \Cref{union}] Let $S \subset V(G \cup G')$ be a subset such that $\left|S\right| < k$. Let $v$ and $w$ be two distinct vertices in $(G \cup G') - S$. If both $v$ and $w$ are in $G$ or in $G'$, then there is a $v-w$ path in $(G \cup G') - S$ by assumption. Otherwise, since $\left|V(G) \cap V(G')\right| \geq k$, there exists $u \in V(G) \cap V(G') \cap V((G \cup G') - S)$. Moreover, since $G - S$ and $G' - S$ are connected, there exist a $v-u$ path in $G - S$ and a $u-w$ path in $G' - S$. But then there is a $v-w$ walk in $(G \cup G') - S$ and so a $v-w$ path as well.   
\end{proof}

Let $G = (V, E)$ and $s \geq 1$ be an instance of \textsc{Graph} $\textsc{VC}_{k-\mbox{\scriptsize{\normalfont{con}}}}$ \textsc{Dimension}. We claim we can check in polynomial time whether a subset $V' \subseteq V$ with $\left|V'\right| \geq s$ is shattered. Indeed, by \Cref{union}, it is enough to check all the $\mathcal{O}(\left|V\right|^{k+1})$ subsets of $V'$ of size at most $k + 1$. Recall that finding a minimum separating set of a graph $G$ is polynomial in the order of $G$. Moreover, if $S \subseteq V(G)$ is a minimum separating set and $A \cup B = V(G - S)$ is a partition into two non-empty sets such that any path from a vertex in $A$ to a vertex in $B$ contains a vertex in $S$ then, for $k > |S|$, the vertices of any $k$-connected subgraph of $G$ are entirely contained in either $A \cup S$ or $B \cup S$. These observations, as shown in \citep[Theorem~1]{KSS93}, allow to test whether $G$ has a $k$-connected subgraph in polynomial time. Therefore, for any $B \subseteq V'$ of size at most $k + 1$, we can check in polynomial time if there exists a $k$-connected subgraph contained in $G - (V' \setminus B)$ and containing $B$. 

Now we prove $\mathsf{NP}$-hardness by a reduction from \textsc{Set Multicover}. Given an instance of \textsc{Set Multicover}, we construct a graph $G = (V, E)$ as follows (see \Cref{fig:multigraph}). The set of vertices $V$ is formed by four pairwise disjoint sets $A$, $B$, $C$ and $D$. $A$ is an independent set of $n \cdot (t + k + 1)$ vertices arranged in $n$ columns of $t + k + 1$ vertices each (every element in the $j$-th column corresponds to a copy of $a_{j}$), $B = \left\{v_{1}, \dots, v_{m}\right\}$ is a clique ($v_{i}$ corresponds to the set $S_{i}$), $C$ is a clique of size $k$ and $D$ is an independent set of $t + m + 1$ vertices. Each vertex in $C$ is connected to all vertices in $B$ (therefore, $B \cup C$ is a clique of size $m + k$) and $D$. Finally, $v_{i} \in B$ is connected to every copy of $a_{j} \in A$ if and only if $a_{j} \in S_{i}$. 

\begin{figure}[h!]
\centering
\includegraphics[scale=0.4]{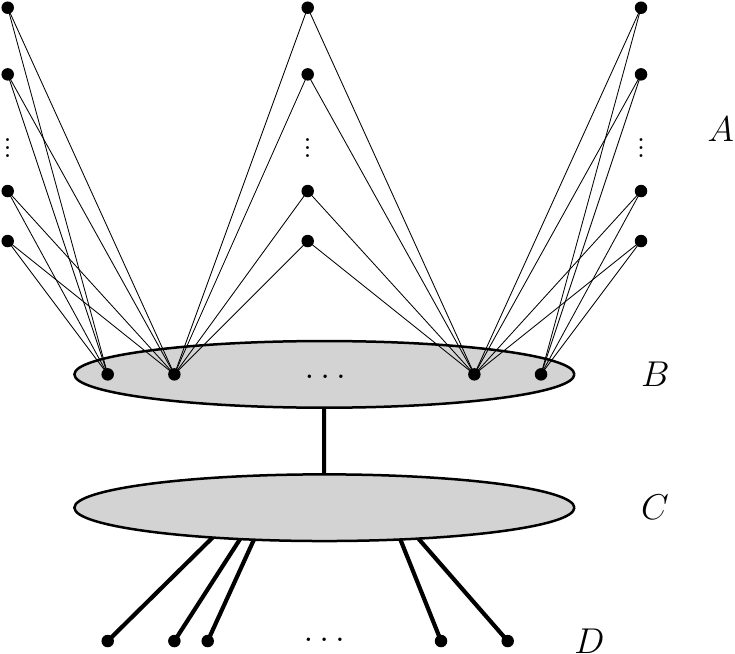}
\caption{The graph $G$ for the reduction. The gray ovals are cliques. A thick edge joining a vertex $v \in D$ to $C$ means that $v$ is adjacent to all the vertices of $C$. Similarly, the thick edge between the ovals means that $B \cup C$ is a clique.}
\label{fig:multigraph}
\end{figure}

Since $B \cup C$ is a clique and $A \cup D$ is an independent set, then $G$ is a split graph. We claim that there is an index set $I \subseteq \left\{1, \dots, m\right\}$ such that $\bigcup_{i \in I}S_{i} = S$, each $a_{i}$ is covered by at least $k$ distinct subsets and $\left|I\right| \leq t$ if and only if $\textsc{VC}_{k-\mbox{\scriptsize{\normalfont{con}}}}(G) \geq \left|V\right| - (t + k)$. 

Suppose first such an index set $I$ exists. We claim that the set $$V' = A \cup D \cup \left\{v_{i} \in B : i \notin I\right\}$$ is shattered. Indeed, the subgraph $G' = G[C \cup \{v_{i} \in B : i \in I\}]$ is a clique of size at least $k + 1$ and each vertex in $V'$ has at least $k$ neighbours in $G'$. Therefore, $V'$ is shattered by the Expansion Lemma. Finally, $\left|I\right| \leq t$ implies that $\left|V'\right| \geq \left|V\right| - t - k$.

Conversely, let $V'$ be a shattered set of maximum cardinality. Then $\left|V'\right| \geq \left|V\right| - (t + k)$. Suppose there exists $c \in V' \cap C$. Then no vertex in $D$ can be shattered, and so $\left|V'\right| \leq \left|V\right| - (t + m + 1) < \left|V\right| - (t + k)$. Therefore, no vertex of $C$ is in $V'$ and $D \subseteq V'$. Moreover, at least one vertex $v \in A$ for each column is in $V'$ and so $v$ has at least $k$ neighbours in the clique $B \setminus V'$. By the Expansion Lemma and since all the vertices in the column of $v$ have identical neighbourhoods, then each vertex in the column of $v$ belongs to $V'$ and so $A \subseteq V'$. Therefore, the number of vertices in $B$ which are in $V'$ is at least $\left|V\right| - (t + k) - \left|A\right| - \left|D\right| = m - t$. We claim that $I = \left\{i : v_{i} \in B \setminus V'\right\}$ is a yes-instance of \textsc{Set Multicover}. Indeed, any vertex of $A$ has at least $k$ neighbours in $B \setminus V'$. In other words, each $a_{j} \in S$ is contained in at least $k$ of the subsets $S_{i}$ with $i \in I$. Moreover, $\left|I\right| \leq m - (m - t) = t$.  
\end{proof}

\section{Graphs of bounded clique-width}

Graphs of bounded tree-width are particularly interesting from an algorithmic point of view: a lot of $\mathsf{NP}$-complete problems can be solved in linear time for them. For example, all graph properties which are expressible in monadic second-order logic with edge-set quantification are decidable in linear time on graphs of bounded tree-width. This is the content of the celebrated algorithmic meta-theorem of \citet{Cou90}. Let us briefly recall that monadic second-order logic is an extension of first-order logic by quantification over sets. The language of \textit{monadic second-order logic of graphs} ($\mbox{MSO}_{1}$ in short) contains the expressions built from the following elements:
\begin{itemize}
\item variables $x, y, \dots$ for vertices and $X, Y, \dots$ for sets of vertices;
\item predicates $x \in X$ and $\textit{adj}(x, y)$;
\item equality for variables, standard Boolean connectives and the quantifiers $\forall$ and $\exists$.
\end{itemize}
By considering the two-sorted incidence graph model, in which the edges form another sort of elements and we have the incidence predicate $\textit{inc}(v, e)$, we obtain \textit{monadic second-order logic of graphs with edge-set quantification} ($\mbox{MSO}_{2}$ in short).

If a graph property is expressible in the more restricted $\mbox{MSO}_{1}$, then \citet{CMR00} showed that it is decidable in linear time even on graphs of bounded clique-width, assuming a $k$-expression of the graph is explicitly given. The following observation shows that, for any graph, being shattered by its $k$-connected subgraphs is a property that makes no exception to this framework.

\begin{lemma} Being shattered by $k$-connected subgraphs is expressible in $\textsc{MSO}_{1}$.
\end{lemma}

\begin{proof} Let $G = (V, E)$ be a graph. The following $\mbox{MSO}_{2}$ sentence says that the subgraph induced by $X \subseteq V$ is connected: $$\textbf{conn}(X) = \forall_{Y \subseteq V}[(\exists_{u \in X} \ u \in Y \wedge \exists_{v\in X} \ v \notin Y) \rightarrow (\exists_{e \in E} \exists_{u \in X} \exists_{v \in X} \ \textit{inc}(u, e) \wedge \textit{inc}(v, e) \wedge u \in Y \wedge v \notin Y)].$$

It is easy to see that the quantification over single edges can be expressed by a $\mbox{MSO}_{1}$ sentence as follows: $$\exists_{a \in V} \exists_{b \in V} \exists_{u \in X} \exists_{v \in X} \ \textit{adj}(a, b) \wedge (u = a \vee u = b) \wedge (v = a \vee v = b) \wedge u \in Y \wedge v \notin Y.$$

The following $\mbox{MSO}_{1}$ sentence says that the subgraph induced by $X \subseteq V$ is $k$-connected: $$\textbf{k-conn}(X) = \exists_{v_{1} \in X} \dots \exists_{v_{k + 1} \in X}(\forall_{u_{1} \in V}\dots\forall_{u_{k - 1} \in V} \ \textbf{conn}(X \setminus \{u_{1}, \dots, u_{k - 1}\})).$$

Finally, the following $\mbox{MSO}_{1}$ sentence says that the set $A \subseteq V$ is shattered by $k$-connected subgraphs:
\begin{equation*} \textbf{shatt}(A) = \forall_{B \subseteq A}\exists_{X \subseteq V} \ \textbf{k-conn}(X) \wedge X \cap A = B.
\end{equation*}
\end{proof}

Therefore, as an immediate consequence of the meta-theorem stated above, we have the following.

\begin{corollary}\label{treeclique} \textsc{Graph} $\textsc{VC}_{k-\mbox{\scriptsize{\normalfont{con}}}}$ \textsc{Dimension} is decidable in linear time for graphs of bounded clique-width.
\end{corollary}

As shown by \citet{CO00}, every graph of bounded tree-width has bounded clique-width but there are graphs of bounded clique-width having unbounded treewidth (for example, complete graphs). Therefore, clique-width can be viewed as a more general concept than tree-width. Moreover, $P_{4}$-free graphs, also known as \textit{cographs}, are exactly those graphs having clique-width at most $2$ \citep{CO00}. See \citep{KLM09} for other classes of graphs of bounded clique-width.

We have seen that \textsc{Graph} $\textsc{VC}_{k-\mbox{\scriptsize{\normalfont{con}}}}$ \textsc{Dimension} is $\mathsf{NP}$-hard even for split graphs. Are there some subclasses of split graphs on which the problem becomes easy? Recall that the \textit{Dilworth number} of a graph $G$ is the size of a largest antichain (or, equivalently, the size of a minimum chain partition) with respect to the quasi-order $\preceq$ defined on the vertices of $G$ as follows: $x \preceq y$ if and only if $N_{G}(x) \subseteq N_{G}[y]$. Graphs with Dilworth number $1$ are precisely the well-known \textit{threshold} graphs, which are the $P_{4}$-free split graphs. Therefore, they have clique-width at most $2$ and we have seen we can decide the VC-dimension in linear time. On the other hand, already a small jump for the Dilworth number of a split graph, from $1$ to $2$, changes the clique-width from bounded to unbounded \citep{KLM14}. Nevertheless, deciding the VC-dimension remains easy.   

\begin{theorem}\label{thres} \textsc{Graph} $\textsc{VC}_{k-\mbox{\scriptsize{\normalfont{con}}}}$ \textsc{Dimension} is decidable in polynomial time for split graphs with Dilworth number at most $2$.
\end{theorem}

\begin{proof} Let $G = (V, E)$ be the input graph. We assume the unique partition of $V$ into a clique of size $\omega(G)$ and an independent set $I$ of size $\alpha(G)$ is given. It is well-known that the problem of finding a minimum chain partition of a poset can be translated into a maximum bipartite matching problem \citep{LP86}. Therefore, we can find in polynomial time a chain partition $I_{1} \cup I_{2}$ of $I$. For $j \in \{1, 2\}$, let $I_{j, \geq k} = \{u \in I_{j} : d(u) \geq k\}$ and $I_{j, \succeq u} = \{v \in I_{j} : v \succeq u\}$. Note that, if $\omega(G) \leq k$, then $G$ contains no $k$-connected subgraph. Therefore, we may assume $\omega(G) > k$. But then, a maximum size shattered set containing vertices from at most one between $I_{1}$ and $I_{2}$ has size $$\max \{\omega(G) - (k + 1) + |I_{1, \geq k}|, \ \omega(G) - (k + 1) + |I_{2, \geq k}|\}.$$ On the other hand, it is not difficult to see that, for any pair $x \in I_{1, \geq k}$ and $y \in I_{2, \geq k}$, a maximum size shattered set containing $x$ as the minimal element from $I_{1, \geq k}$ and $y$ as the minimal element from $I_{2, \geq k}$ has size
\begin{equation*}
\omega(G) - \max\{k + 1, \ 2k - |N(x) \cap N(y)|\} + |I_{1, \succeq x}| + |I_{2, \succeq y}|.
\end{equation*}
\end{proof}

It is not difficult to see that outerplanar graphs have tree-width at most $2$, and so bounded clique-width as well. On the other hand, planar graphs, in general, do not have bounded clique-width, even if restricted to be bipartite and with maximum degree $3$ (see \citep[Lemma~1]{KLM09}). In the next section we investigate the complexity of \textsc{Graph} $\textsc{VC}_{k-\mbox{\scriptsize{\normalfont{con}}}}$ \textsc{Dimension} when restricted to planar bipartite graphs with maximum degree at most $4$. 

\section{Subclasses of planar graphs}

For our reductions we need a series of auxiliary results. The following variant of \textsc{Hamiltonian Circuit Through Specified Edge} was shown to be $\mathsf{NP}$-complete in a comment on \citep{Lab11}.

\begin{center}
\fbox{%
\begin{minipage}{5.5in}
\textsc{Hamiltonian Circuit Through Specified Edge}
\begin{description}[\compact\breaklabel\setleftmargin{70pt}]
\item[Instance:] A planar cubic bipartite graph $G = (V, E)$ and $e \in E$.
\item[Question:] Does $G$ contain a Hamiltonian circuit through $e$?  
\end{description}
\end{minipage}}
\end{center}

For completeness, we give the details of the reduction from the following variant of \textsc{Hamiltonian Circuit}, shown to be $\mathsf{NP}$-complete by \citet{ANS80}.

\begin{center}
\fbox{%
\begin{minipage}{5.5in}
\textsc{Hamiltonian Circuit}
\begin{description}[\compact\breaklabel\setleftmargin{70pt}]
\item[Instance:] A planar cubic bipartite graph $G = (V, E)$.
\item[Question:] Does $G$ contain a Hamiltonian circuit?  
\end{description}
\end{minipage}}
\end{center}

\begin{theorem}[\citet{Lab11}] \textsc{Hamiltonian Circuit Through Specified Edge} is $\mathsf{NP}$-complete even for planar cubic bipartite graphs.
\end{theorem}

\begin{proof} Given a planar cubic bipartite graph $G = (V, E)$, we construct a graph $G' = (V', E')$ by replacing a vertex $u$ with the gadget depicted in \Cref{fig:edge} and we set $e = u_{2}'z$. Clearly, $G'$ is a planar cubic bipartite graph. 

\begin{figure}[h!]
\centering
\includegraphics[scale=0.5]{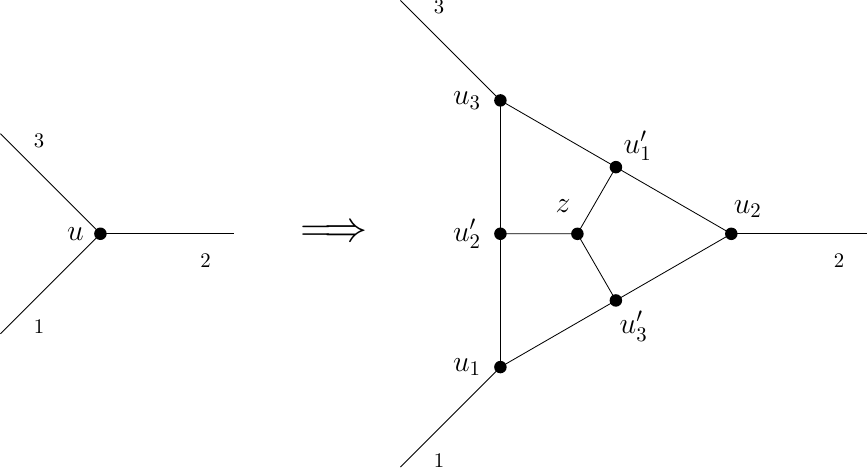}
\caption{Construction of the graph $G'$. The vertex $u$ is replaced by a gadget containing $7$ vertices.}
\label{fig:edge}
\end{figure}

It is easy to see that $G'$ contains a Hamiltonian circuit through $e$ if and only if $G$ contains a Hamiltonian circuit. 
\end{proof}

Now we strengthen some hardness results obtained by \citet{Dou92} and related to the problems \textsc{Hamiltonian Path} and \textsc{Connected Dominating Set}. We think they might be of independent interest. 

\begin{theorem}\label{ham} \textsc{Hamiltonian Path} is $\mathsf{NP}$-complete even for planar bipartite graphs where all the vertices have degree $3$, except two, which have degree $1$.  
\end{theorem}

\begin{proof} Let $G = (V, E)$ and $uv \in E$ be an instance of \textsc{Hamiltonian Circuit Through Specified Edge}, where $G$ is a planar cubic bipartite graph. Our reduction constructs a graph $G' = (V', E')$ as follows. Let $V' = V \cup \{a, b\}$, where $a$, $b$ are new vertices, and $E' = (E \setminus \{uv\}) \cup \{au, bv\}$. Clearly, $G'$ is a planar bipartite graph where all the vertices have degree $3$, except $a$ and $b$ of degree $1$. It is easy to see that $G$ has a Hamiltonian circuit through $uv$ if and only if $G'$ has a Hamiltonian path (between $a$ and $b$). 
\end{proof}

By \Cref{ham} and using the same reduction as in \citep[Theorem~1]{Dou92} (if the graph $G$ in \citep[Theorem~1]{Dou92} is bipartite, then $G'$ is bipartite too), we have the following. 

\begin{theorem} Given a planar bipartite graph $G = (V, E)$ with $\Delta(G) = 3$, it is $\mathsf{NP}$-complete to decide if there exists a spanning tree $T$ for $G$ such that $d_{T}(v)$ is either $1$ or $3$, for any $v \in V$.
\end{theorem}

Moreover, by the Proofs of \citep[Corollary~1 and Corollary~2]{Dou92}, we have that the following variant of \textsc{Connected Dominating Set} is $\mathsf{NP}$-complete.

\begin{center}
\fbox{%
\begin{minipage}{5.5in}
\textsc{Connected Dominating Set}
\begin{description}[\compact\breaklabel\setleftmargin{70pt}]
\item[Instance:] A planar bipartite graph $G = (V, E)$ with maximum degree $3$.
\item[Question:] Does there exist $D \subseteq V$ such that $G[D]$ is connected, every vertex in $V \setminus D$ is adjacent to at least one vertex in $D$ and $|D| \leq \frac{|V|}{2} - 1$?  
\end{description}
\end{minipage}}
\end{center}

Now we are ready to prove the $\mathsf{NP}$-hardness of \textsc{Graph} $\textsc{VC}_{\mbox{\scriptsize{\normalfont{con}}}}$ \textsc{Dimension} for the class of planar bipartite graphs with maximum degree $3$. Given the connection between $\textsc{VC}_{\mbox{\scriptsize{\normalfont{con}}}}$ and the connected domination number, hidden in \Cref{con}, it is no surprise we are going to reduce from the problem stated above.

\begin{theorem}\label{redpla} \textsc{Graph} $\textsc{VC}_{\mbox{\scriptsize{\normalfont{con}}}}$ \textsc{Dimension} is $\mathsf{NP}$-complete even for planar bipartite graphs with maximum degree $3$.
\end{theorem}

\begin{proof} We prove $\mathsf{NP}$-hardness by a reduction from the variant of \textsc{Connected Dominating Set} introduced above. Let $G = (V, E)$ be an instance of \textsc{Connected Dominating Set} where $G$ is a planar bipartite graph, $\Delta(G) = 3$ and $|V| = n$. We claim $G$ has a connected dominating set $D$ with $|D| \leq \frac{n}{2} - 1$ if and only if $\textsc{VC}_{\mbox{\scriptsize{\normalfont{con}}}}(G) \geq \frac{n}{2} + 1$. Clearly, we may assume $n \geq 46$. 

Suppose first $G$ has a connected dominating set $D$ with $|D| \leq \frac{n}{2} - 1$. Then each vertex in $V \setminus D$ can be joined to $G[D]$ independently of one another, and so $V \setminus D$ can be shattered by connected subgraphs. Therefore, $\textsc{VC}_{\mbox{\scriptsize{\normalfont{con}}}}(G) \geq |V \setminus D| \geq \frac{n}{2} + 1$.

Conversely, suppose $\textsc{VC}_{\mbox{\scriptsize{\normalfont{con}}}}(G) \geq \frac{n}{2} + 1$ and let $A$ be a shattered set of maximum size. By assumption, $|A| \geq 24$.

\begin{claim}\label{count} Each connected component $C$ of $G - A$ has at most $|C| + 2$ neighbours in $A$. 
\end{claim}

Indeed, if $C$ contains at most two $1$-vertices, then the claim clearly holds. Therefore, let $\{u_{1}, \dots, u_{k}\}$, with $k \geq 3$, be the set of $1$-vertices in $C$. Since $C$ is connected, there exists a $u_{1}-u_{2}$ path $P$ in $C$. Moreover, any $u_{3}-u_{1}$ path intersects $P$ in a $3$-vertex. Applying this reasoning again, we have that $d_{1}(C) - 2 \leq d_{3}(C)$, where $d_{i} = |\{v \in C : d_{C}(v) = i\}|$. But then $C$ has at most $2d_{1}(C) + d_{2}(C) = |C| + d_{1}(C) - d_{3}(C) \leq |C| + 2$ neighbours in $A$. \openbox \\
  
Since $\Delta(G) = 3$, then each vertex $u \in A$ has at least one neighbour in $G - A$, otherwise it would not be possible to shatter $u$ and a vertex in $A \setminus N(u)$. Let $C_{1}, \dots, C_{k}$ be the connected components of $G - A$. 

\begin{claim}\label{threecomp} There exists a vertex in $A$ joined to less than three connected components.
\end{claim}

Indeed, suppose each vertex in $A$ is joined to exactly three connected components. By \Cref{count} and double counting the size of the edge cut $[A, \overline{A}]$, we have $3|A| \leq \sum (|C_{i}| + 2)$. Therefore, $$k \geq \frac{3|A| - \sum |C_{i}|}{2} = \frac{3|A| - (n - |A|)}{2} \geq \frac{n}{2} + 2,$$ contradicting the fact that $|V \setminus A| \leq \frac{n}{2} - 1$. \openbox

\begin{claim}\label{twocomp} No vertex in $A$ is joined to exactly two connected components.
\end{claim}

Suppose, to the contrary, there exists $u \in A$ joined to exactly two connected components, say $C_{1}$ and $C_{2}$. Then, every vertex in $A \setminus N(u)$ is joined to either $C_{1}$ or $C_{2}$. By \Cref{count} we have $|C_{1}| + |C_{2}| + 3 \geq |A| - 1$, from which $|C_{1}| + |C_{2}| \geq \frac{n}{2} - 3$ and so $\sum^{k}_{i = 3}|C_{i}| \leq 2$. 

We claim that the (at least $|A| - 1$) vertices in $A \setminus N(u)$ are all joined to $C_{1}$ or all joined to $C_{2}$. Suppose, to the contrary, there exist $v$ and $w$ in $A \setminus N[u]$ such that $v$ is joined to $C_{1}$, $w$ is joined to $C_{2}$ but none of them is joined to both $C_{1}$ and $C_{2}$. Let $B \subseteq A$ be the subset of vertices which are joined to both $C_{1}$ and $C_{2}$. Since $|C_{1}| + 2 \geq |B|$ and $|C_{2}| + 2 \geq |B|$, then $|B| \leq \frac{|C_{1}| + |C_{2}|}{2} + 2 \leq \frac{n + 6}{4}$. Therefore, at least $\frac{n}{2} + 1 - \frac{n + 6}{4} = \frac{n - 2}{4}$ vertices of $A$ are not joined to both $C_{1}$ and $C_{2}$. But then the set $B'$ of vertices of $A \setminus (N[v] \cup N[w])$ which are not joined to both $C_{1}$ and $C_{2}$ has size at least $\frac{n - 2}{4} - 6 \geq 5$. Since $\{x, v\}$ and $\{x, w\}$ are shattered, for any $x \in B'$, then each vertex in $B'$ is joined to some connected component different from $C_{1}$ and $C_{2}$, contradicting the fact that the remaining connected components can be joined to at most four vertices in $A$. 

Therefore, the (at least $|A| - 1$) vertices in $A \setminus N(u)$ are all joined to the same connected component, say $C_{1}$. Suppose now there exists $u' \in N(u)$ not joined to $C_{1}$. By \Cref{count} we have $|C_{1}| \geq \frac{n}{2} - 2$. Moreover, the set $\{u', w\}$ is shattered, for any $w \in A \setminus \{u, u'\}$, contradicting the fact that the remaining component has at most $3$ neighbours in $A$. Therefore, each $v \in A$ is joined to $C_{1}$. Since $|C_{1}| \geq |A| - 2 \geq \frac{n}{2} - 1$, then $C_{2} = \varnothing$, a contradiction. \openbox \\

By \Cref{threecomp} and \Cref{twocomp}, there exists $v \in A$ joined to exactly one connected component $C$ of $G - A$. Then $v$ has at most two neighbours in $A$. Moreover, each of the (at least $|A| - 3$) nonneighbours of $v$ in $A$ is joined to $C$, otherwise it would not be possible to shatter the set $\{v, w\}$, for some $w \in A \setminus N[v]$. Suppose there exists $v' \in N(v) \cap A$ not joined to $C$. By \Cref{count} we have $|C| + 2 \geq |A| - 2$, from which $|C| \geq \frac{n}{2} - 3$ and so the remaining connected components contain at most two vertices. Since the set $\{v', w\}$ is shattered, for any $w \in A \setminus N[v']$, and there are at least $|A| - 3 \geq 21$ such nonneighbours, we get a contradiction to the fact that a component in $G - A - C$ can have at most four neighbours in $A$. Therefore, each $v \in A$ is joined to $C$ and $|C| \geq |A| - 2 \geq \frac{n}{2} - 1$. But then $|C| = \frac{n}{2} - 1$ and $C$ is a connected dominating set for $G$.  
\end{proof}

Clearly, \textsc{Graph} $\textsc{VC}_{\mbox{\scriptsize{\normalfont{con}}}}$ \textsc{Dimension} becomes easy for graphs $G$ with $\Delta(G) \leq 2$. Indeed, $\textsc{VC}_{\mbox{\scriptsize{\normalfont{con}}}}(P_{n}) = 2$, for $n \geq 3$, and $\textsc{VC}_{\mbox{\scriptsize{\normalfont{con}}}}(C_{3}) = 2$ and $\textsc{VC}_{\mbox{\scriptsize{\normalfont{con}}}}(C_{n}) = 3$, for $n \geq 4$. Note that, if $\Delta(G) \leq 2$, then $G$ has tree-width at most $2$. Therefore, the conclusion follows from \Cref{treeclique} as well.

We conclude by showing the $\mathsf{NP}$-hardness of \textsc{Graph} $\textsc{VC}_{2-\mbox{\scriptsize{\normalfont{con}}}}$ \textsc{Dimension} for planar bipartite graphs with $\Delta(G) = 4$. We leave as an open problem to determine what happens if we further impose $\Delta(G) = 3$.

\begin{theorem}\label{redpla2con} \textsc{Graph} $\textsc{VC}_{2-\mbox{\scriptsize{\normalfont{con}}}}$ \textsc{Dimension} is $\mathsf{NP}$-complete even for planar bipartite graphs with maximum degree $4$.
\end{theorem}

\begin{proof} Our reduction is from \textsc{Hamiltonian Circuit}, which remains $\mathsf{NP}$-complete even for planar cubic bipartite graphs \citep{ANS80}. Given an instance $G = (V, E)$ of \textsc{Hamiltonian Circuit}, where $G$ is a planar cubic bipartite graph with $|V| = n$, we construct a graph $G'$ by replacing each vertex $u$ of $G$ with the gadget depicted in \Cref{fig:hcgadget}. For $0 \leq i \leq 2$, the vertices $u_{i}$ are the \textit{gates} and the vertices $u_{i}'$ are the \textit{connectors}. Finally, each pair $u_{i}'u_{i + 1}'$ (indices modulo $3$) of connectors in the gadget is connected by a path of length $2$ with inner vertex $u_{i + 2}''$, called a \textit{crossing vertex}. Clearly, the construction can be done in polynomial time and the resulting graph $G' = (V', E')$ is planar, bipartite and $\Delta(G') = 4$. We claim that $\textsc{VC}_{2-\mbox{\scriptsize{\normalfont{con}}}}(G') \geq |V'| - 5n$ if and only if $G$ contains a Hamiltonian circuit. 

\begin{figure}[h!]
\centering
\includegraphics[scale=0.5]{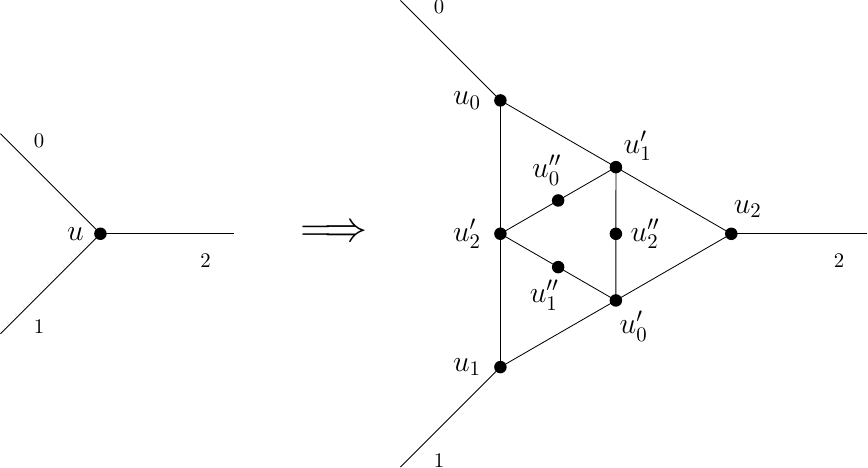}
\caption{Construction of the graph $G'$. The vertex $u$ is replaced by a gadget containing $9$ vertices.}
\label{fig:hcgadget}
\end{figure}

Suppose first $G$ contains a Hamiltonian circuit $C$. Suppose, without loss of generality, $u \in V$ is incident to the edges $1$ and $2$ in $C$. Then we augment the subgraph induced by $E(C)$ in $G'$ with the path $u_{1}u_{0}'u_{2}$. Repeating this procedure for every vertex of $G$, we get a circuit in $G'$ containing three vertices from every gadget. The vertices in $\{u_{0}, u_{0}'', u_{1}'', u_{2}''\}$ can be joined to this circuit, independently of one another, via paths through the connectors $u_{0}'$, $u_{1}'$ and $u_{2}'$. In all cases, the resulting subgraph is clearly $2$-connected. Repeating this process for every gadget, we have that a set of size $|V'| - 5n$ can be shattered.

Suppose now $\textsc{VC}_{2-\mbox{\scriptsize{\normalfont{con}}}}(G') \geq |V'| - 5n$ and let $A$ be a shattered set of maximum cardinality.

\begin{claim}\label{exact} For any gadget $H \subseteq G'$, exactly one gate and the three crossing vertices are in $A$.
\end{claim}

We show first that at most four vertices of $H$ are in $A$. Suppose, to the contrary, $H$ contains at least five vertices of $A$. Then at least one crossing vertex is in $A$, otherwise at least two gates would be in $A$, contradicting the fact that the set consisting of a connector in $H$ and a vertex not in $H$ is shattered. Therefore, at least one crossing vertex is in $A$, say without loss of generality $u_{1}''$. Then the connectors $u_{0}'$ and $u_{2}'$ are not in $A$. If another crossing vertex is in $A$, then $u_{1}' \notin A$ and at least two gates are in $A$, a contradiction. Therefore, $u_{1}''$ is the only crossing vertex in $A$. But then all the gates are in $A$, a contradiction again. 

Since $\textsc{VC}_{2-\mbox{\scriptsize{\normalfont{con}}}}(G') \geq |V'| - 5n$, then exactly four vertices per gadget are in $A$. We have seen that at most one gate per gadget is in $A$. Moreover, exactly one gate per gadget is in $A$, otherwise a crossing vertex and one of its neighbouring connectors would both be in $A$. Let $u_{0}$ be the gate of gadget $H$ in $A$. Suppose one of its neighbouring connectors is in $A$ (clearly, there exists at most one such connector). But then, again, a crossing vertex and at least one of its neighbouring connectors would both be in $A$, a contradiction. Therefore, both $u_{1}'$ and $u_{2}'$ are not in $A$ and it is easy to see that it must be $A \cap V(H) = \{u_{0}, u_{0}'', u_{1}'', u_{2}''\}$. \openbox \\

By \Cref{exact}, there exists a $2$-connected subgraph of $G'$ containing crossing vertices in every gadget and avoiding exactly one gate per gadget. Therefore, for any gadget, this subgraph contains exactly two of the edges incident to its gates and originally in $G$. This means that, contracting each gadget to a single vertex, we obtain a $2$-regular connected spanning subgraph. Therefore, $G$ contains a Hamiltonian circuit.   
\end{proof}

\section{Conclusion}

This work represents a continuation of the systematic study of the VC-dimensions of graphs initiated by \citet{KKRUW97}. We have concentrated on the VC-dimension with respect to $k$-connected subgraphs. In particular, we have proved the $\mathsf{NP}$-completeness of the associated decision problem and observed its decidability in linear time for graphs of bounded clique-width. In this context, we believe two interesting open problems arise: determine the complexity for unit interval graphs and for planar graphs in the remaining cases $3 \leq k \leq 5$. We conjecture that the first problem is in $\mathsf{P}$. Finally, it would be interesting to focus on the study of the VC-dimension with respect to other classes of subgraphs. 

\bibliographystyle{elsarticle-num-names} 
\bibliography{refs}

\begin{thebibliography}{17}
\providecommand{\natexlab}[1]{#1}
\providecommand{\url}[1]{\texttt{#1}}
\providecommand{\urlprefix}{URL }
\expandafter\ifx\csname urlstyle\endcsname\relax
  \providecommand{\doi}[1]{doi:\discretionary{}{}{}#1}\else
  \providecommand{\doi}[1]{doi:\discretionary{}{}{}\begingroup
  \urlstyle{rm}\url{#1}\endgroup}\fi
\providecommand{\bibinfo}[2]{#2}

\bibitem[{Vapnik and Chervonenkis(1971)}]{VC71}
\bibinfo{author}{V.~N. Vapnik}, \bibinfo{author}{A.~Y. Chervonenkis},
  \bibinfo{title}{On the uniform convergence of relative frequencies of events
  to their probabilities}, \bibinfo{journal}{Theory of Probability and Its
  Applications} \bibinfo{volume}{16}~(\bibinfo{number}{2})
  (\bibinfo{year}{1971}) \bibinfo{pages}{264--280}.

\bibitem[{Haussler and Welzl(1987)}]{HW87}
\bibinfo{author}{D.~Haussler}, \bibinfo{author}{E.~Welzl},
  \bibinfo{title}{$\varepsilon$-{N}ets and {S}implex {R}ange {Q}ueries},
  \bibinfo{journal}{Discrete \& Computational Geometry} \bibinfo{volume}{2}
  (\bibinfo{year}{1987}) \bibinfo{pages}{127--151}.

\bibitem[{Kranakis et~al.(1997)Kranakis, Krizanc, Ruf, Urrutia, and
  Woeginger}]{KKRUW97}
\bibinfo{author}{E.~Kranakis}, \bibinfo{author}{D.~Krizanc},
  \bibinfo{author}{B.~Ruf}, \bibinfo{author}{J.~Urrutia},
  \bibinfo{author}{G.~J. Woeginger}, \bibinfo{title}{The {VC}-dimension of set
  systems defined by graphs}, \bibinfo{journal}{Discrete Applied Mathematics}
  \bibinfo{volume}{77}~(\bibinfo{number}{3}) (\bibinfo{year}{1997})
  \bibinfo{pages}{237--257}.

\bibitem[{West(2001)}]{West}
\bibinfo{author}{D.~B. West}, \bibinfo{title}{{I}ntroduction to {G}raph
  {T}heory}, \bibinfo{publisher}{Prentice Hall}, \bibinfo{edition}{2} edn.,
  \bibinfo{year}{2001}.

\bibitem[{Courcelle and Olariu(2000)}]{CO00}
\bibinfo{author}{B.~Courcelle}, \bibinfo{author}{S.~Olariu},
  \bibinfo{title}{Upper bounds to the clique width of graphs},
  \bibinfo{journal}{Discrete Applied Mathematics}
  \bibinfo{volume}{101}~(\bibinfo{number}{1–3}) (\bibinfo{year}{2000})
  \bibinfo{pages}{77--114}.

\bibitem[{Kami\'{n}ski et~al.(2009)Kami\'{n}ski, Lozin, and
  Milani\v{c}}]{KLM09}
\bibinfo{author}{M.~Kami\'{n}ski}, \bibinfo{author}{V.~V. Lozin},
  \bibinfo{author}{M.~Milani\v{c}}, \bibinfo{title}{Recent developments on
  graphs of bounded clique-width}, \bibinfo{journal}{Discrete Applied
  Mathematics} \bibinfo{volume}{157}~(\bibinfo{number}{12})
  (\bibinfo{year}{2009}) \bibinfo{pages}{2747--2761}.

\bibitem[{Courcelle(1990)}]{Cou90}
\bibinfo{author}{B.~Courcelle}, \bibinfo{title}{The {M}onadic {S}econd-{O}rder
  {L}ogic of {G}raphs. I. {R}ecognizable {S}ets of {F}inite {G}raphs},
  \bibinfo{journal}{Information and Computation}
  \bibinfo{volume}{85}~(\bibinfo{number}{1}) (\bibinfo{year}{1990})
  \bibinfo{pages}{12--75}.

\bibitem[{Anthony et~al.(1995)Anthony, Brightwell, and Cooper}]{ABC95}
\bibinfo{author}{M.~Anthony}, \bibinfo{author}{G.~Brightwell},
  \bibinfo{author}{C.~Cooper}, \bibinfo{title}{The {V}apnik-{C}hervonenkis
  {D}imension of a {R}andom {G}raph}, \bibinfo{journal}{Discrete Mathematics}
  \bibinfo{volume}{138}~(\bibinfo{number}{1}) (\bibinfo{year}{1995})
  \bibinfo{pages}{43--56}.

\bibitem[{Garey and Johnson(1979)}]{GJ79}
\bibinfo{author}{M.~R. Garey}, \bibinfo{author}{D.~S. Johnson},
  \bibinfo{title}{Computers and {I}ntractability: {A} Guide to the {T}heory of
  {NP}-Completeness}, \bibinfo{publisher}{W. H. Freeman}, \bibinfo{year}{1979}.

\bibitem[{Schaefer(2000)}]{Sch00}
\bibinfo{author}{M.~Schaefer}, \bibinfo{title}{Deciding the {VC}-dimension is
  $\mathsf{\Sigma}^{p}_{3}$-complete, {II}}, \bibinfo{type}{Tech. Rep.}
  \bibinfo{number}{TR00-006}, \bibinfo{institution}{DePaul University},
  \bibinfo{year}{2000}.

\bibitem[{Kirousis et~al.(1993)Kirousis, Serna, and Spirakis}]{KSS93}
\bibinfo{author}{L.~M. Kirousis}, \bibinfo{author}{M.~J. Serna},
  \bibinfo{author}{P.~G. Spirakis}, \bibinfo{title}{{P}arallel {C}omplexity of
  the {C}onnected {S}ubgraph {P}roblem}, \bibinfo{journal}{SIAM J. Comput.}
  \bibinfo{volume}{22}~(\bibinfo{number}{3}) (\bibinfo{year}{1993})
  \bibinfo{pages}{573--586}.

\bibitem[{Courcelle et~al.(2000)Courcelle, Makowsky, and Rotics}]{CMR00}
\bibinfo{author}{B.~Courcelle}, \bibinfo{author}{J.~A. Makowsky},
  \bibinfo{author}{U.~Rotics}, \bibinfo{title}{Linear {T}ime {S}olvable
  {O}ptimization {P}roblems on {G}raphs of {B}ounded {C}lique-{W}idth},
  \bibinfo{journal}{Theory of Computing Systems}
  \bibinfo{volume}{33}~(\bibinfo{number}{2}) (\bibinfo{year}{2000})
  \bibinfo{pages}{125--150}.

\bibitem[{Korpelainen et~al.(2014)Korpelainen, Lozin, and Mayhill}]{KLM14}
\bibinfo{author}{N.~Korpelainen}, \bibinfo{author}{V.~V. Lozin},
  \bibinfo{author}{C.~Mayhill}, \bibinfo{title}{Split {P}ermutation {G}raphs},
  \bibinfo{journal}{Graphs and Combinatorics}
  \bibinfo{volume}{30}~(\bibinfo{number}{3}) (\bibinfo{year}{2014})
  \bibinfo{pages}{633--646}.

\bibitem[{Lov\'{a}sz and Plummer(1986)}]{LP86}
\bibinfo{author}{L.~Lov\'{a}sz}, \bibinfo{author}{M.~D. Plummer},
  \bibinfo{title}{Matching {T}heory}, \bibinfo{publisher}{North-Holland},
  \bibinfo{year}{1986}.

\bibitem[{Labarre(????)}]{Lab11}
\bibinfo{author}{A.~Labarre}, \bibinfo{title}{Comment on "{C}omplexity of
  finding $2$ vertex-disjoint $(|V|/2)$-cycles in cubic graphs?"}
  \urlprefix\url{http://cstheory.stackexchange.com/questions/6107/complexity-of-finding-2-vertex-disjoint-v-2-cycles-in-cubic-graphs}.

\bibitem[{Akiyama et~al.(1980)Akiyama, Nishizeki, and Saito}]{ANS80}
\bibinfo{author}{T.~Akiyama}, \bibinfo{author}{T.~Nishizeki},
  \bibinfo{author}{N.~Saito}, \bibinfo{title}{{NP}-{C}ompleteness of the
  {H}amiltonian {C}ycle {P}roblem for {B}ipartite {G}raphs},
  \bibinfo{journal}{Journal of Information Processing} \bibinfo{volume}{3}
  (\bibinfo{year}{1980}) \bibinfo{pages}{73--76}.

\bibitem[{Douglas(1992)}]{Dou92}
\bibinfo{author}{R.~J. Douglas}, \bibinfo{title}{{NP}-completeness and degree
  restricted spanning trees}, \bibinfo{journal}{Discrete Mathematics}
  \bibinfo{volume}{105}~(\bibinfo{number}{1–3}) (\bibinfo{year}{1992})
  \bibinfo{pages}{41--47}.

\end{thebibliography}

\end{document}